\renewcommand{\section}{\@startsection%
{section}%
{1}%
{0em}%
{1.7em}%
{1.2em}%
{\normalfont\large\centering\bfseries}}
\renewcommand{\@seccntformat}[1]%
{\csname the#1\endcsname.\hspace{0.5em}}
\numberwithin{equation}{section}
\newtheorem{theorem}{Theorem}[section]
\newtheorem{proposition}[theorem]{Proposition}
\newtheorem{lemma}[theorem]{Lemma}
\newtheorem{corollary}[theorem]{Corollary}
\theoremstyle{definition}
\newtheorem{definition}[theorem]{Definition}
\newtheorem{remark}[theorem]{Remark}
\newcommand{\abs}[1]{\mid{#1}\mid}
\newcommand{\cc}[1]{\overline{#1}}
\newcommand{\ceq}{\colonequals}
\newcommand\lrb[1]{\left\lbrace#1\right\rbrace}
\newcommand\lrp[1]{\left (#1\right)}
\newcommand\ip[2]{\langle {#1},{#2} \rangle}
\newcommand\no[1]{\| {#1} \|}
\newcommand\dis[1]{{\stackengine{-1pt}{ $\sim$ }{$\scriptscriptstyle {#1}$}{U}{c}{F}{T}{S}}}
\newcommand\cA[1]{\mathcal {#1}}
\DeclareMathOperator{\isosc}{isosc}
\newcommand{\mathleft}{\@fleqntrue\@mathmargin0pt}
\newcommand{\mathcenter}{\@fleqnfalse}
\begin{document}
\begin{titlepage}
\title{The $k$-adjacency operators and adjacency Jacobi matrix on distance-regular graphs
\footnotetext{%
Mathematics Subject Classification(2010):
05C63, 
34L05, 
47B36. 
}
\footnotetext{%
Keywords: 
Spectral graph theory;
Distance-regular graphs;
Jacobi operators.
}
}
\author{
\textbf{Josu\'e I. Rios-Cangas}
\\
\small Departamento de Matem\'aticas\\[-1.6mm] 
\small Universidad Aut\'onoma Metropolitana, Iztapalapa Campus\\[-1.6mm]
\small San Rafael Atlixco 186, C.P. 09340, Iztapalapa, Mexico City.\\[-1.6mm] 
\small \texttt{jottsmok@xanum.uam.mx}
}
\date{}
\maketitle
\vspace{-4mm}
\begin{center}
\begin{minipage}{5in}
  \centerline{{\bf Abstract}} \bigskip
We deal in this work with a class of graphs, namely, the class of distance-regular graphs, in which  on the basis of $k$-adjacency operators, the adjacency operator $A$ of a distance-regular graph is identified as a Jacobi matrix. To get so, the set of the $k$-adjacency operators is recognized as a canonical basis in a certain Hilbert space, where the spectrum of the Jacobi matrix coincides with the support of the measure of $A$. The obtained identification permits a deeper spectral analysis of the graph. The finite-dimensional case is addressed by means of the extension theory of nondensely defined, symmetric linear operators.   
\end{minipage}
\end{center}
\thispagestyle{empty}
\end{titlepage}

\section{Introduction}\label{sec1}

\noindent The graph theory is closely related to the spectral theory of linear operators in Hilbert spaces, by recognizing the set of vertices of a graph $G$, as the canonical basis of a Hilbert space $\mathcal{H}$.  The adjacency matrix of $G$ acts as a linear operator in $\mathcal{H}$ and its spectrum allows analyzing the behavior of the graph. It is interesting to mention that it was not until the early eighties that the spectral theory of finite graphs was extended to the infinite case \cite{MR683222,MR657116} and nowadays, there are numerous topics with applications related with spectral theory of infinite graphs (see for example \cite{MR3869419,MR3891807,MR3617953}). Thereby, and as a motivation for the complex network theory \cite{MR3902704,MR3823031}, we tackle the spectral analysis of infinite graphs. We also handle the finite-dimensional case throughout extension theory of nondensely defined, symmetric linear operators; this is due to the fact that the selfadjoint extensions that we study here and the adjacency operator, have the same spectral distribution (q.v. Remark~\ref{rm:jinfnity-perturbation}). This viewpoint sheds some new light on the extension theory of finite graphs, and it is related to the solution of the classical truncated moment problem \cite[Sec.\, 10]{MR1318517}. It is worth noting that the densely defined condition of a linear operator in a Hilbert space can be relaxed, even when the Hilbert space is finite-dimensional, by using the theory of linear relations  \cite{MR0123188,MR0361889} (or multivalued linear operators \cite{MR1631548}). 

The concept of distance-regular graphs was introduced by N. Biggs in his seminal work \cite{MR1271140}, by realizing that these graphs held combinatorial symmetries and linear algebraic properties.  For an amenable and solid analysis in spectral theory, we address the notion of a distance-regular graph with respect to $k$-adjacency operators in a Hilbert space (q.v. Definition~\ref{def:regular-graph}). Basically,  the $k$-adjacency operator of a graph maps every vertex $v$ into the sum of vertices which are at distance $k$ from $v$. Particularly, a $k$-adjacency operator turns out an adjacency operator of another graph in the same Hilbert space. Besides, on distance-regular graphs, the $k$-adjacency operators obey a recurrence relation (see Theorem~\ref{th:recursive-Ak}), which allows these operators to be a basis in a certain Hilbert space and to get the identification of the adjacency operator with a Jacobi matrix. The advantage of using this identification and the theory of Jacobi operators  \cite{MR1711536} lies in the fact that we will develop an exhaustive spectral analysis of a distance-regular graph. We emphasize that this identification has been addressed in several works (see for example \cite{MR1271140,MR3617953}). This paper contains relatively new results and our viewpoint throws some new light on the theories of distance-regular graphs and $k$-adjacency operators, which are the basis of this article.

Let us summarize this paper as follows. We briefly discuss in Section~\ref{sec:relations} some standard facts on graphs, and we restrict our attention to bipartite graphs. Besides, we look more closely at the $k$-adjacency operators  and we lay out some practical concepts and results related to these operators.  Also, we present in Theorem~\ref{th:characterization-bigraph01} a characterization of bipartite graphs. 
We see in Section~\ref{sec:Adjacency-Jacobi} the regularity of every $k$-adjacency operator, and we introduce a notion of cyclicity (q.v. Definition~\ref{def:isoscycle-Koperator}).  Moreover, we give the notion of distance-regular graphs in terms of the $k$-adjacency operators. Theorem~\ref{th:recursive-Ak} shows that all the  $k$-adjacency operators are bounded, selfadjoint, regular, isoscyclical and obey a recurrence relation, which permits that the adjacency operator is identified as a Jacobi matrix in certain Hilbert space, in the sense that the support of the spectrum of the adjacency operator coincides with the spectrum of Jacobi operator (q.v. Theorem~\ref{th:Jacobi-Adjacency} and Remark~\ref{Rm:spectral-properties}). Section~\ref{sec:graphs-with-finite-diameter} is devoted to distance-regular graphs with finite diameter, and we address in this section the problem of finding the support of the spectrum of the adjacency operator, throughout extension theory for nondensely defined symmetric operators. Theorem~\ref{Thm:Jacobi-vs-A} allows determining the Jacobi operator that corresponds with the adjacency operator, and  Corollary~\ref{cor:multiplicity-eigenvalues-A} exhibits the so-called Biggs' formula, which provides the multiplicity of every eigenvalue of the adjacency operator.  Finally, we present in Section~\ref{sec:Examples} two standard examples to clarify the exposition of this work.

\section{Bigraphs and the $k$-adjacency operators}
\label{sec:relations}
\noindent
In this note any graph is assumed to be countable, undirected, unweighted, simple (without loops or multiple edges) and connected  (any pair of vertices is linked by edges), with a set of vertices
\begin{gather}
\label{eq:vertices}
 V\ceq\{\delta_i\}_{i\in{\mathbb N}}\,.
\end{gather}
Here ${\mathbb N}_0\ceq{\mathbb N}\cup\{0\}$ and $\partial (\delta_i,\delta_j )$ represents the distance between two vertices $\delta_i$, $\delta_j$, i.e., the minimal number of edges that join $\delta_i$ and $\delta_j$. 

\begin{definition} We say that two vertices $\delta_i,\delta_j$ are \emph{$k$-adjacent}, denoted by $\delta_i\dis k\delta_j$, $k\in {\mathbb N}_0$,  whenever $\partial (\delta_i,\delta_j )=k$. We simply say in the case $k=1$ that $\delta_i,\delta_j$ are adjacent (or neighbors) and we write $\delta_i\sim \delta_j$.
\end{definition}

The following subsets form a partition of $V$, with respect to $\delta_1$.
 \begin{gather}\label{eq:partition-bigraph}
\begin{split}
\lrb{\delta_i\in V\,:\, \partial (\delta_i,\delta_1)=2k}_{k\in{\mathbb N}_0}\,\,\,;\,\,\,
\lrb{\delta_i\in V\,:\, \partial (\delta_i,\delta_1)=2k+1}_{k\in{\mathbb N}_0}\,.
\end{split}
 \end{gather}

\begin{definition} A graph is called  \emph{bipartite} (or \emph{bigraph} for short) if  no two vertices belonging to the same subset of \eqref{eq:partition-bigraph} are adjacent (e.g., see Fig.~\ref{fig:star-grid}).
\end{definition}
\begin{figure}[h]\centering
    \includegraphics[width=15cm]{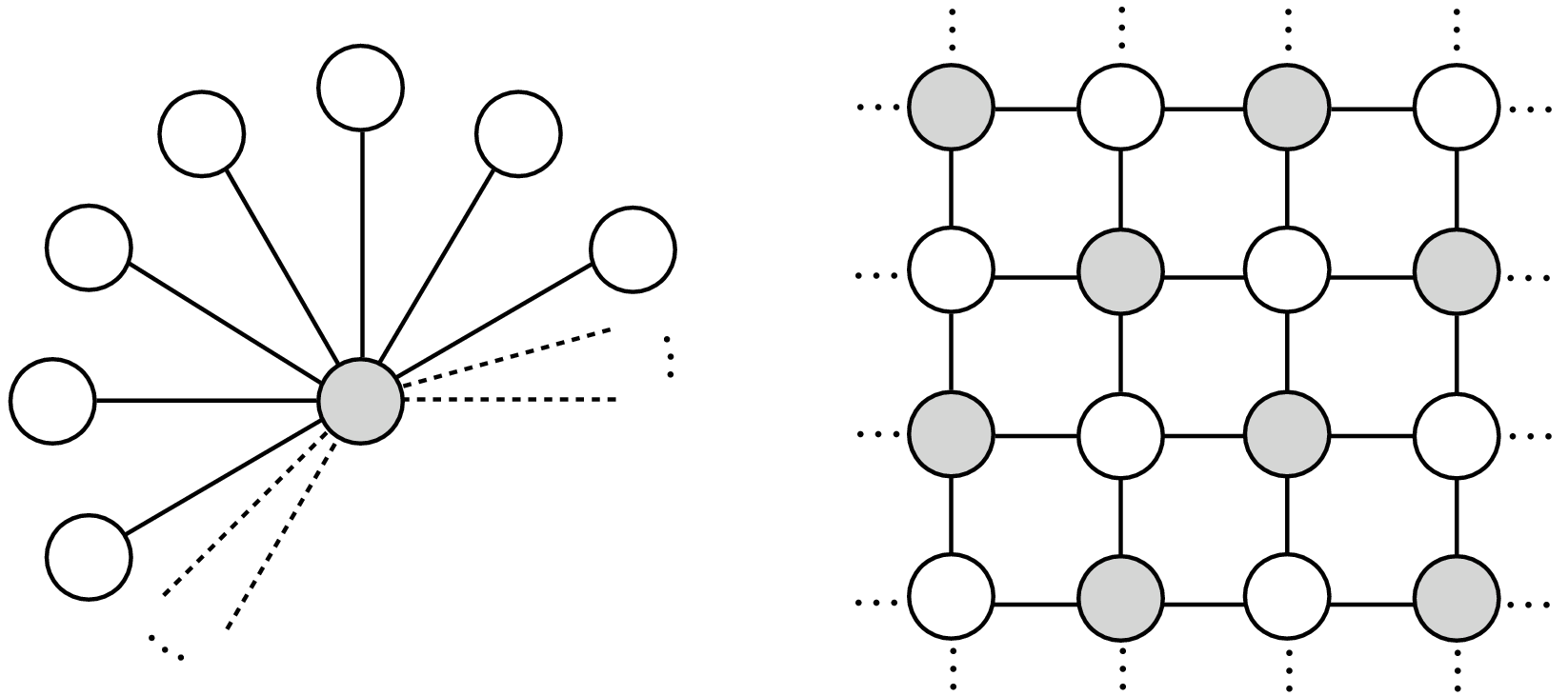}
    \caption{\centering Infinite star graph and grid graph are bipartite.}\label{fig:star-grid}
\end{figure}
\begin{proposition}\label{th-no-odd-cycles}
A graph is bipartite if and only if it has no cycles of odd length.
\end{proposition}
\begin{proof}
An odd cycle contains three vertices $\delta_i,\delta_j,\delta_l$, which satisfy $\delta_i\sim \delta_j$ and both are $n$-adjacent to $\delta_l$, for some $n\in{\mathbb N}$. This implies that both $\delta_i,\delta_j$ belong to the same subspace of \eqref{eq:partition-bigraph}. Therefore, a bigraph has no cycles of odd length.

On the other hand, if the graph is not bipartite, then one of \eqref{eq:partition-bigraph} contains two adjacent vertices $\delta_i,\delta_j$, both $k$-adjacent to $\delta_1$, for some $k\in{\mathbb N}$. Hence, any closed path which contains $\delta_1,\delta_i,\delta_j$, also contains an odd cycle. 
\end{proof}
From now on, any graph is also assumed  to be \emph{locally finite}, i.e., each of its vertices has a finite number of neighbors.  In Fig.~\ref{fig:star-grid}, the grid graph is locally finite, but the infinite star graph is not. 

For a given graph with set of vertices \eqref{eq:vertices},
we consider  the Hilbert space $\mathcal{H}\ceq l_2(V)$ of square-summable sequences with canonical basis $V$ and inner product $\ip{\cdot}{\cdot}$, being antilinear in the first argument.

For $k\in{\mathbb N}_0$, let $\tilde A_k$ be the linear operator acting on $V$ by
\begin{align}
\label{eq:adj-matrix}
\tilde A_k \delta_i=\displaystyle \sum_{\delta_j\dis k\delta_i}\delta_j\,,
\end{align}
which is well-defined, since we work on locally finite graphs. Some authors refer to \eqref{eq:adj-matrix} as the $k$-distance matrix \cite{MR1271140,MR3617953}. 

It is well-known that a linear operator $T$ in $\mathcal{H}$ is densely defined if its domain is dense in $\mathcal{H}$. Besides, $T$ is symmetric if 
\begin{gather*}
\ip{f}{Tf}\in{\mathbb R}\,,\quad\mbox{for all}\quad f\in\textrm{dom}\, T\,.
\end{gather*}
Moreover, it is selfadjoint if $T=T^*$, where $T^*$ is the adjoint of $T$.

\begin{proposition}\label{prop:k-operator}
Every $\tilde A_k$ is symmetric and densely defined in $\mathcal{H}$. 
\end{proposition}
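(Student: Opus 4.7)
The plan is to verify the two conditions separately: (i) the domain $\Span V$ is dense in $\H=l_2(V)$, and (ii) symmetry holds on this domain.

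For density, there is nothing to do beyond recalling that $V=\{\delta_i\}_{i\in\N}$ is declared to be the canonical basis of $l_2(V)$, and that finite linear combinations of a basis are dense in a separable Hilbert space; hence $\Span V = \dom\tilde A_k$ is dense. Before proceeding to symmetry, I would pause to check well-definedness of $\tilde A_k \delta_i$ as an element of $\H$. The sum in \eqref{eq:adj-matrix} has $\dg_k(\delta_i)$ terms, so it suffices to note that local finiteness of $\G$ (finiteness of $\dg_1$) implies, by a short induction on $k$, that $\dg_k(\delta_i)<\infty$ for every $k\in\N_0$: the set of vertices at distance $k$ from $\delta_i$ is contained in the set of neighbors of the set of vertices at distance $k-1$ from $\delta_i$, a finite union of finite sets. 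Therefore $\tilde A_k\delta_i$ is a finite linear combination of basis vectors, and extending by linearity gives $\tilde A_k:\Span V\to\Span V\subset\H$.

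For symmetry, because both inner products $\ip{f}{\tilde A_k f}$ and $\ip{\tilde A_k f}{f}$ are sesquilinear in $f$, it is enough to check the equality $\ip{\delta_i}{\tilde A_k \delta_j}=\ip{\tilde A_k \delta_i}{\delta_j}$ for all basis vectors. From \eqref{eq:adj-matrix} and the orthonormality of $V$,
\beao
\ip{\delta_i}{\tilde A_k \delta_j}=\begin{cases}1 & \text{if } \delta_i\dis k \delta_j,\\ 0 & \text{otherwise,}\end{cases}\qquad \ip{\tilde A_k\delta_i}{\delta_j}=\begin{cases}1 & \text{if } \delta_j\dis k \delta_i,\\ 0 & \text{otherwise.}\end{cases}
\eeao
Since $\G$ is undirected, the distance function is symmetric, $\partial(\delta_i,\delta_j)=\partial(\delta_j,\delta_i)$, so the two relations $\delta_i\dis k\delta_j$ and $\delta_j\dis k\delta_i$ coincide and the two numerical values agree.

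There is no real obstacle here; the only point that requires a remark rather than a one-line invocation is the finiteness of $\dg_k$ for $k\ge 2$, which is not literally in the hypothesis (only $\dg_1$ is assumed finite) but follows from the inductive argument above. Everything else is a direct transcription of the symmetry of the distance $\partial$ on an undirected graph into the statement of operator symmetry.
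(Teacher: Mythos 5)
Your proposal is correct and follows essentially the same route as the paper: the paper's proof consists precisely of noting $\ip{\delta_i}{\tilde A_k\delta_j}=\ip{\tilde A_k\delta_i}{\delta_j}$ and the density of $\Span V$, with the well-definedness via local finiteness stated just before the proposition. Your additional remarks (the induction giving $\dg_k(\delta_i)<\infty$ and the reduction of symmetry to basis vectors by sesquilinearity) simply make explicit what the paper leaves as ``straightforward.''
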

\begin{proof} The proof is straightforward once we note that $\ip{\delta_i}{\tilde A_k\delta_j}=\ip{\tilde A_k\delta_i}{\delta_j}$, for any pair of vertices $\delta_i,\delta_j\in V$, and since $\textrm{span}\, V$ is dense in $\mathcal{H}$.
\end{proof}

\begin{definition} For $k\in{\mathbb N}_0$, the closure of $\tilde A_k$ is called  the \emph{$k$-adjacency} operator and denoted  by $A_k$. \end{definition}

Every $A_k$ is a densely defined, closed and symmetric linear operator. Also, $A_0$ is the identity operator $I$ and $A_1$ (which we only write $A$ for this operator) is the \emph{adjacency operator}. Besides, if the \emph{diameter} of the graph
\begin{gather}
\label{eq:diameter-graph}
d\colonequals\sup \lrb{\partial(\delta_i,\delta_j)\,:\,\delta_i,\delta_j\in V}<\infty\,,
\end{gather}
then $A_k=0$, for all $k>d$. 
\begin{remark}
We point out that the number  ${\no{A_k\delta_i}}^2$, with $k\in{\mathbb N}_0$, represents how many $k$-adjacent vertices $\delta_i$ has.
\end{remark}

\begin{definition}
A graph is called \emph{uniformly locally finite}, with bound $m<\infty$, if 
\begin{gather*}
{\no{A\delta_i}}^2\leq m\,, \quad\mbox{for all}\quad \delta_i\in V\,.
\end{gather*}
\end{definition}

\begin{remark}\label{re:bounded-condition}
The property to be uniformly locally finite characterizes the bounded condition of the adjacency operator.  Namely, $A$ is bounded if and only if its graph (not necessarily connected) is  uniformly locally finite with bound $m$. In this case, $\no A\leq m$  \cite[Th.\,3.2]{MR683222}.
\end{remark}

The below result uses the fact that if a closed densely defined operator in $\mathcal{H}$ is bounded, then it belongs to $\cA B(\mathcal{H})$ (the class of bounded operators defined on the whole space).
\begin{proposition}\label{prop:bounded-Ak-uniformly}
On uniformly locally finite graphs, $A_k$ belongs to $\cA B(\mathcal{H})$ and, hence, it is selfadjoint, for every $k\in{\mathbb N}_0$. 
\end{proposition}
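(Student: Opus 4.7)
The plan is to reinterpret $A_k$ as the (closure of the) adjacency operator of another graph on the same vertex set, and then invoke Remark~\ref{re:bounded-condition}.

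Specifically, for $k\geq 1$ I would consider the graph $\G_k$ whose vertex set is $V$ and whose edges are the pairs $\{\delta_i,\delta_j\}$ with $\delta_i\dis{k}\delta_j$. The defining formula \eqref{eq:adj-matrix} says precisely that $\tilde A_k$ is the (unclosed) adjacency operator of $\G_k$, and that the $1$-degree of $\delta_i$ in $\G_k$ coincides with its $k$-degree $\dg_k(\delta_i)$ in $\G$. Consequently, by Remark~\ref{re:bounded-condition} applied to $\G_k$, it suffices to exhibit a uniform upper bound on the sequence $\{\dg_k(\delta_i)\}_{i\in\N}$.

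To produce such a bound I would argue by induction on $k$. The case $k=0$ is trivial since $A_0=I$. For the inductive step, observe that any vertex at distance exactly $k$ from $\delta_i$ must be adjacent in $\G$ to some vertex at distance $k-1$ from $\delta_i$, the penultimate vertex of any geodesic joining them being such a witness. Therefore
\beao
\dg_k(\delta_i)\;\leq\;\sum_{\delta_j\dis{k-1}\delta_i}\dg_1(\delta_j)\;\leq\;m\,\dg_{k-1}(\delta_i)\,,
\eeao
which iterates to $\dg_k(\delta_i)\leq m^k$ for every $i\in\N$. Remark~\ref{re:bounded-condition} applied to $\G_k$ then yields $\no{A_k}\leq m^k$, and since $A_k$ is symmetric by Proposition~\ref{prop:k-operator}, bounded, and everywhere defined on $\H$, it is automatically selfadjoint. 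The one subtle point is the recursive inequality above, which rests on a standard walk-counting observation about geodesics; beyond that, everything reduces to a direct reuse of Remark~\ref{re:bounded-condition} on the auxiliary graph $\G_k$, so I do not foresee any serious obstacle.
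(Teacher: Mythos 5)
Your proposal is correct and follows essentially the same route as the paper: both reduce the claim to the observation that $A_k$ is the adjacency operator of the distance-$k$ graph on $V$, verify that this auxiliary graph is uniformly locally finite, and then invoke Remark~\ref{re:bounded-condition}. The only (harmless) difference is in how the uniform bound on $\dg_k$ is obtained --- your combinatorial induction gives $\dg_k(\delta_i)\leq m^k$ directly, whereas the paper derives $\no{A_k\delta_i}^2\leq\no{A^k\delta_i}^2\leq\no{A}^{2k}$ from the same geodesic/walk-counting idea.
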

\begin{proof}The case $k=0$ is simple, and from Remark~\ref{re:bounded-condition} $A$ is bounded. So, for $k\geq 2$ and $\delta_i$ fixed, one has  that $A^k\delta_i$ is the sum of vertices connected to $\delta_i$, by a walk of $k$-steps. In particular, the sum of vertices $k$-adjacent to $\delta_i$. Then, 
\begin{gather}\label{eq:condition-Ak}
{\no{A_k\delta_i}}^2\leq{\no{A^k\delta_i}}^2\leq{\no A}^{2k}\,, \quad\mbox{for all}\quad \delta_i\in V\,.\end{gather}
We conclude the proof using Remark \eqref{re:bounded-condition}, bearing in mind that $A_k$ is the adjacency operator of another  graph, which due to \eqref{eq:condition-Ak} is uniformly locally finite.
\end{proof}

It is a well-known fact that the spectrum $\sigma(T)$  of 
a  selfadjoint operator $T$ is a real subset and is the complement in ${\mathbb C}$ of the \emph{regular} set 
\begin{align*}
\rho(T)\ceq\lrb{\zeta\in{\mathbb C}\,:\,(T-\zeta I)^{-1}\in\cA B(\mathcal{H})}\,.
\end{align*}
Moreover, $\sigma(T)=\sigma_d(T)\cup\sigma_c(T)$, where 
\begin{align*}
\sigma_p(T)&\ceq\lrb{\zeta\in{\mathbb R}\,:\,\ker(T-\zeta I)\neq\{0\}}& \mbox{(\emph{point spectrum})}\\
\sigma_c(T)&\ceq\lrb{\zeta\in{\mathbb R}\,:\,\textrm{ran}\,(T-\zeta I)\neq\cc{\textrm{ran}\,(T-\zeta I)}}& \mbox{(\emph{continuous spectrum})}
\end{align*}

Let us decompose the Hilbert space into  $\mathcal{H}={\cA U}\oplus{\cA V}$, where  
 \begin{align*}
 \cA U&\ceq \cc{\textrm{span}\,\lrb{\delta_i\in V\,:\, \partial (\delta_i,\delta_1)=2k}}_{k\in{\mathbb N}_0}\,;\\
\cA  V&\ceq\cc{\textrm{span}\,\lrb{\delta_i\in V\,:\, \partial (\delta_i,\delta_1)=2k+1}}_{k\in{\mathbb N}_0}\,, 
 \end{align*}
which are the closure of the linear envelope of the sets given in \eqref{eq:partition-bigraph}.

\begin{theorem}\label{th:characterization-bigraph01}
On uniformly locally finite graphs, the following are equivalent:
\begin{enumerate}[\upshape(i)]
\item\label{eq:bip-ulf-1} The graph is bipartite.
\item\label{eq:bip-ulf-2} For every $f,g\in \cA U$ (or equivalently $f,g\in \cA V$),  
\begin{gather}\label{eq:general-condition-isoscycle}
\ip{AA_k f}{A_k g}=0\,,\quad \mbox{for all $k\in{\mathbb N}_0\,.$}
\end{gather}
\item\label{eq:bip-ulf-3}  The adjacency operator holds  
\begin{align}\label{eq:k-distance-bigraph}
A\cA U\subset\cA V\quad;\quad A\cA V\subset\cA U\,. 
\end{align}
\item\label{eq:bip-ulf-4} If $\zeta\in\sigma(A)$ then so does $-\zeta$, viz. 
$\sigma(A)$ is symmetric about zero. 
\end{enumerate}
\end{theorem}
\begin{proof}
\eqref{eq:bip-ulf-1}$\Rightarrow$\eqref{eq:bip-ulf-2} Since every $A_k$ is bounded, it is sufficient to prove \eqref{eq:general-condition-isoscycle} on $V$. For every  $\delta_i,\delta_j \in \cA U$, one has that $A_k\delta_i$ and $A_k\delta_j$ belong to the same subspace, either $\cA U$ or $\cA V$. Therefore, inasmuch as $\cA U$ and $\cA V$ do not contain adjacent vertices, one yields $\ip{AA_k\delta_i}{A_k\delta_j}=0$.
 
 \eqref{eq:bip-ulf-2}$\Rightarrow$\eqref{eq:bip-ulf-3} Let $\delta_i\in\cA U$. If $A\delta_i\notin \cA V$ then there exists $\delta_j\in \cA U$ such that $\delta_j\sim\delta_i$, which implies that $\delta_j,\delta_i$ are both $2k$-adjacent to $\delta_1$, for some $k\in{\mathbb N}$. Thus, 
\begin{align*}
\ip{AA_{2k}\delta_1}{A_{2k}\delta_1}=\displaystyle \sum_{\delta_t,\delta_s\dis {2k}\delta_1}\ip{A\delta_t}{\delta_s}\geq 2\,,
\end{align*}
a contradiction with  \eqref{eq:general-condition-isoscycle}, since $\delta_1\in\cA U$. Hence, $A\delta_i\in\cA V$ and  due to $A$ is bounded, $A\cA U\subset\cA V$.  The proof of $A\cA V\subset\cA U$ follows the same above lines. 

\eqref{eq:bip-ulf-3}$\Rightarrow$\eqref{eq:bip-ulf-4}  Since $A\in\cA B(\mathcal{H})$ and by \eqref{eq:k-distance-bigraph}, for any $f_1+f_2\in\ker (A-\zeta I)$, with $f_1\in\cA U$,  $f_2\in\cA V$, one has that $Af_1=\zeta f_2$ and  $Af_2=\zeta f_1$, which by a simple computation  $f_1-f_2\in\ker (A+\zeta I)$, viz.  $\ker (A-\zeta I)$ and $\ker (A+\zeta I)$ are in one-to-one correspondence. Using the last reasoning, we now proceed by contraposition. If $-\zeta\in\rho(A)$, i.e., $(A+\zeta I)^{-1}\in\cA B(\mathcal{H})$, then $(A-\zeta I)^{-1}$ is a linear operator, which is closed, in view of $A$ is closed. So, for every $h_1+h_2\in\mathcal{H}$, with $h_1\in\cA U$,  $h_2\in\cA V$, there exist $g_1\in\cA U$,  $g_2\in\cA V$, such that 
\begin{gather*}
A(g_1-g_2)+\zeta(g_1-g_2)=-h_1+h_2\,.
\end{gather*}
Then, by \eqref{eq:k-distance-bigraph}, one has $Ag_1-\zeta g_2=h_2$ and $Ag_2-\zeta g_1=h_1$, whereby $h_1+h_2$ belongs to $\textrm{dom}\, (A-\zeta I)^{-1}$. Hence, $(A-\zeta I)^{-1}\in\cA B(\mathcal{H})$, i.e.,  $\zeta\in\rho(A)$.

\eqref{eq:bip-ulf-4}$\Rightarrow$\eqref{eq:bip-ulf-1} Since $A$ is selfadjoint, we may consider its spectral measure $E_A$ and
\begin{gather*}
\mu_{A,\delta_i}(B)=\ip{\delta_i}{E_A(B)\delta_i}\,,\quad (\delta_i\in V)
\end{gather*}
which denotes a probability measure defined on the $\sigma$-algebra of Borel subsets of ${\mathbb R}$. Besides, this measure is symmetric since $\sigma(A)$ is symmetric. In this fashion, for each odd $m\in{\mathbb N}$,
\begin{gather*}
\ip{\delta_i}{A^m\delta_i}=\int x^md\mu_{A,\delta_i}=0\,,
\end{gather*}
wherefrom it follows that there are no closed paths of odd length, in particular, cycles of odd length. Hence, the graph is bipartite as a consequence of Proposition~\ref{th-no-odd-cycles}.
 \end{proof}

\begin{remark} On uniformly locally finite bigraphs, the property \eqref{eq:k-distance-bigraph} implies that  the adjacency operator is decomposed into $A= B\oplus{B^*}$, where \begin{gather*}
B=A_{\upharpoonright_{\cA U}}\colon\cA U\to\cA V\,.
\end{gather*}
\end{remark}

\section{Distance-regular graphs and the adjacency Jacobi operator}
\label{sec:Adjacency-Jacobi}
\noindent
Let us introduce some concepts before working on distance-regular graphs, which will be useful  in the sequel. 
\begin{definition}\label{def:regular-graph}
For $k\in{\mathbb N}_0$,  we say that $A_k$ is \emph{regular}, with degree $\deg A_k\in{\mathbb N}_0$, if 
\begin{gather*}
{\no {A_k\delta_i}}^2=\deg A_k\,,\quad\mbox{for all}\quad \delta_i\in V\,,
\end{gather*}
viz. all the vertices have the same number $\deg A_k$ of $k$-adjacent vertices.
\end{definition}

A vertex $\delta_i$ it said to have a \emph{$k$-isoscycle}, if there exist two adjacent vertices such that they are both $k$-adjacent to $\delta_i$. 
In such a case, $\delta_i$ belongs to an odd closed path of diameter equal $k$. Moreover, the number of $k$-isoscycles of $\delta_i$ is determined by 
$\ip{AA_k\delta_i}{A_k\delta_i}/2\in{\mathbb N}_0$.

\begin{definition}\label{def:isoscycle-Koperator}
For $k\in{\mathbb N}_0$, we call $A_k$ \emph{isoscyclical}, with isoscycle $\isosc A_k\in{\mathbb N}_0$, if 
\begin{gather*}
\frac12\ip{AA_k\delta_i}{A_k\delta_i}=\isosc A_k\,,\quad\mbox{for all}\quad \delta_i\in V\,,
\end{gather*}
viz. every vertex has the same number $\isosc A_k$ of $k$-isoscycles vertices. 
\end{definition}

It is worth pointing out that not all the operators $A_k$ are necessarily regular or isoscyclical, if one is.

\begin{definition}
A graph is called \emph{distance-regular} if there exists a sequence $\{(a_n,b_n)\}_{n\in{\mathbb N}}\subset{\mathbb N}^2$, such that for any pair of $k$-adjacent vertices $\delta_i,\delta_j$, with $k\in{\mathbb N}_0$, the following holds
\begin{align}\label{eq:cond-dist-regular}
\begin{split}
  \ip{A_{k-1}\delta_i}{A\delta_j}&=a_k\,,\\
\ip{A_{k+1}\delta_i}{A\delta_j}&=b_{k+1}\,.
  \end{split}
\end{align}
The sequence $\{(a_n,b_n)\}_{n\in{\mathbb N}}$ is known as  the \emph{intersection} of the graph.
\end{definition}

\begin{theorem}\label{th:recursive-Ak}
A distance-regular graph with intersection $\{(a_n,b_n)\}_{n\in{\mathbb N}}$, is uniformly locally finite and its $k$-adjacency operators are bounded and selfadjoint, for all $k\in{\mathbb N}_0$. Moreover, these operators hold the following difference equation:
\begin{gather}
\label{eq:recurrence-Ak}
AA_k=a_{k+1}A_{k+1}+\alpha_kA_k+b_kA_{k-1}\,,\quad \text{with }A_{-1}=0,
\end{gather}
where $\alpha_k=\deg A-(a_k+b_{k+1})$ and  $\alpha_0=0$. Furthermore, for $k>0$, every $A_k$ is regular and isoscyclical, with
\begin{gather}\label{eq:degree-Ak-distance}
\deg A_k=\displaystyle\prod_{n=1}^{k}\frac{b_n}{a_n}\,\quad\mbox{and}\quad  
\isosc A_k=\frac{\alpha_k}{2}\displaystyle\prod_{n=1}^{k}\frac{b_n}{a_n}\,.
\end{gather}
\end{theorem}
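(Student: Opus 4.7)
The plan is to verify the four claims in the order listed. First, the uniformly locally finite property is immediate from the distance-regular definition applied at $k=0$: setting $\delta_j=\delta_i$ (so $\partial(\delta_i,\delta_j)=0$) in the second equation of \eqref{eq:cond-dist-regular} gives $\no{A\delta_i}^2=\ip{A\delta_i}{A\delta_i}=b_1$ for every vertex $\delta_i$. By Remark~\ref{re:bounded-condition}, $A$ is then bounded, and Proposition~\ref{prop:bounded-Ak-uniformly} yields that every $A_k$ is bounded and selfadjoint. This also records $\deg(A)=b_1$, which will be used tacitly below.

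Next I would establish the difference equation \eqref{eq:recurrence-Ak}. By boundedness and the density of $\Span V$ it suffices to verify it on each basis vector $\delta_i\in V$. Expanding
\beao
AA_k\delta_i=\sum_{\delta_l\in V}\bigl|\{\delta_j\in V:\delta_j\dis k\delta_i,\ \delta_j\sim\delta_l\}\bigr|\,\delta_l,
\eeao
the triangle inequality forces the nonzero contributions to come from $\delta_l$ with $\partial(\delta_i,\delta_l)\in\{k-1,k,k+1\}$. The coefficient in the cases $\partial(\delta_i,\delta_l)=k-1$ and $\partial(\delta_i,\delta_l)=k+1$ equals $\ip{A_k\delta_i}{A\delta_l}$, which the distance-regular relations \eqref{eq:cond-dist-regular} identify with $b_k$ and $a_{k+1}$ respectively. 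For $\partial(\delta_i,\delta_l)=k$ the coefficient is not given directly, but since the $\deg(A)=b_1$ neighbors of $\delta_l$ split according to their distance to $\delta_i$, one obtains $\ip{A_k\delta_i}{A\delta_l}=\deg(A)-a_k-b_{k+1}=\alpha_k$. Collecting terms and using $A_{-1}=0$ (with $\alpha_0=0$ verified by taking $k=0$) yields \eqref{eq:recurrence-Ak}. This step is the main obstacle, as it requires the careful case analysis for the middle term.

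For the regularity and isoscycle formulas I would argue by induction using a double count. Summing $\ip{A_{k-1}\delta_i}{A\delta_j}=a_k$ over all $\delta_j\dis k\delta_i$ gives $a_k\dg_k(\delta_i)$, while reversing the order of summation rewrites the same quantity as $\sum_{\delta_m\dis (k-1)\delta_i}\ip{A_k\delta_i}{A\delta_m}=b_k\dg_{k-1}(\delta_i)$. Hence $\dg_k(\delta_i)=(b_k/a_k)\dg_{k-1}(\delta_i)$, and iterating from $\dg_0(\delta_i)=1$ produces the product formula in \eqref{eq:degree-Ak-distance}. In particular $\no{A_k\delta_i}^2$ is independent of $\delta_i$, so $A_k$ is regular.

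Finally, since $A_{k-1}\delta_i,A_k\delta_i,A_{k+1}\delta_i$ are sums over pairwise disjoint distance classes, they are mutually orthogonal in $\H$. Taking the inner product of \eqref{eq:recurrence-Ak} (applied to $\delta_i$) with $A_k\delta_i$ therefore collapses to
\beao
\ip{AA_k\delta_i}{A_k\delta_i}=\alpha_k\no{A_k\delta_i}^2=\alpha_k\prod_{n=1}^{k}\frac{b_n}{a_n}.
\eeao
Dividing by two gives the isoscycle formula in \eqref{eq:degree-Ak-distance} and shows that $\isosc_k(\delta_i)$ is independent of $\delta_i$, completing the proof.
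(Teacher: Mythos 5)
Your proposal is correct and follows essentially the same route as the paper: boundedness via $b_1$-regularity together with Remark~\ref{re:bounded-condition} and Proposition~\ref{prop:bounded-Ak-uniformly}, the recurrence by counting the neighbors of a vertex according to their distance class from $\delta_i$ with the middle coefficient $\alpha_k=\deg(A)-(a_k+b_{k+1})$ obtained exactly as in \eqref{eq:alphas-Ak}, and the degree and isoscycle formulas from orthogonality of the distance classes. Your double count for $\dg_k(\delta_i)=(b_k/a_k)\dg_{k-1}(\delta_i)$ is just the combinatorial restatement of the paper's identity $\ip{AA_{k-1}\delta_i}{A_k\delta_i}=\ip{A_{k-1}\delta_i}{AA_k\delta_i}$, so the arguments coincide in substance.
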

\begin{proof}
The first part of the statement is straightforward by Proposition~\ref{prop:bounded-Ak-uniformly}, once we note 
by virtue of \eqref{eq:cond-dist-regular} that $A$ is regular, with $\deg A=b_1$. To prove \eqref{eq:recurrence-Ak},  we regard two vertices $\delta_i\dis r\delta_j$, with $r\in{\mathbb N}_0$. If $\abs{r-k}>1$ then $\ip{AA_k\delta_i}{\delta_j}=\ip{A_k\delta_i}{A\delta_j}=0,$ which can be nonzero whenever $r\in\{k+1,k,k-1\}$. So, at a suitable $r$, taking into account \eqref{eq:cond-dist-regular},
\begin{gather}\label{eq:recurrence-delta}
\ip{A_k\delta_i}{A\delta_j}=a_{k+1}\ip{A_{k+1}\delta_i}{\delta_j}+\alpha_k\ip{A_k\delta_i}{\delta_j}+b_k\ip{A_{k-1}\delta_i}{\delta_j}\,,
\end{gather}
whence if $r=k$, then 
\begin{align*}
\alpha_k&=\ip{A_k\delta_i}{A\delta_j}\\&=\ip{A\delta_j}{A\delta_j}-\ip{A_{k-1}\delta_i}{A\delta_j}-\ip{A_{k+1}\delta_i}{A\delta_j}\\&=\deg A-(a_k+b_{k+1})\,.
\end{align*}
Hence,  \eqref{eq:recurrence-delta} implies \eqref{eq:recurrence-Ak}, since every $A_k$ is continuous. Now, for $k\in{\mathbb N}$ and $\delta_i\in V$, it follows by \eqref{eq:recurrence-Ak} that 
\begin{align*}
\ip{A_{k}\delta_i}{A_{k}\delta_i}&=\frac{1}{a_{k}}\ip{AA_{k-1}\delta_i}{A_{k}\delta_i}\\
&=\frac{1}{a_{k}}\ip{A_{k-1}\delta_i}{AA_{k}\delta_i}=\frac{b_{k}}{a_{k}}\ip{A_{k-1}\delta_i}{A_{k-1}\delta_i}\,,
\end{align*}
which recursively implies ${\no{A_k\delta_i}}^2=\prod_{n=1}^kb_n/a_n$. Also,  \eqref{eq:recurrence-Ak} produces 
\begin{align*}
\isosc A_k&=\frac12\ip{AA_{k}\delta_i}{A_{k}\delta_i}=\frac12\alpha_k\ip{A_k\delta_i}{A_k\delta_i}\,,\end{align*} whence one infers  \eqref{eq:degree-Ak-distance}.
\end{proof}

From now on, any graph is assumed to be distance-regular, which means that the $k$-adjacency operators are bounded, selfadjoint, regular and isoscyclical, for all $k\in{\mathbb N}_0$.

\begin{remark}\label{rem:Ak-polynomial}
Theorem~\ref{th:recursive-Ak} claims that every $A_k$ is a polynomial at $A$, of degree $k\in{\mathbb N}_0$. Indeed, $A_0=A^0$, $A_1=A$ and by \eqref{eq:recurrence-Ak},
\begin{gather}
\label{eq:polynomial-A_k-wrA}
A_{k}=\frac{1}{a_{k}}\Big(AA_{k-1}+\left(a_{k-1}+b_{k}-\deg A\right)A_{k-1}-b_{k-1}A_{k-2}\Big)\,,\quad k\geq2\,. 
\end{gather}
Besides, the intersection sequence is bounded. Actually,  \eqref{eq:degree-Ak-distance} implies $\alpha_k\geq0$ and consequently $\deg A\geq a_k+b_{k+1}$, for all $k\in{\mathbb N}$. Hence, $\{a_n\}_{n\in{\mathbb N}},\{b_n\}_{n\in{\mathbb N}}$ are bounded as well as $\{(a_n,b_n)\}_{n\in{\mathbb N}}$.
\end{remark}

\begin{proposition}\label{prop:polynomial-degree-A_k}
The $k$-adjacency operator  (seen as a polynomial at $A$) holds 
\begin{gather}\label{eq:polynomial-degree-A_k}
A_k(\deg A)=\deg A_k\,,\quad\mbox{for all}\quad k\in{\mathbb N}_0\,. 
\end{gather}
\end{proposition}
\begin{proof}
We will proceed by induction on $k$. Clearly, $A_0(\deg A)=\deg A_0$ and  $A_1(\deg A)=\deg A_1$. Then, we may suppose that \eqref{eq:polynomial-degree-A_k} holds for $k-1$. Note that \eqref{eq:degree-Ak-distance} implies $\deg A_{k}={b_{k}}\deg A_{k-1}/{a_{k}}$. In this fashion by \eqref{eq:polynomial-A_k-wrA}, 
\begin{align*}
A_{k}(\deg A)=&\,\frac{1}{a_{k}}\Big((\deg A) \deg  A_{k-1}+(a_{k-1}+b_{k}-\deg A)\deg A_{k-1}\\&-b_{k-1}\deg A_{k-2}\Big)=\frac{1}{a_{k}}(b_{k}\deg A_{k-1})\,,
\end{align*}
which yields \eqref{eq:polynomial-degree-A_k}.
\end{proof}

For a set of vertices $W\subset V$, let $\partial W$ denote the set of edges incident with exactly one vertex of $W$. 
\begin{definition}
The \emph{isoperimetric constant} of a graph is 
$\inf{\abs{\partial W}}/{\abs W}$, where the infimum is taken over all nonempty finite subsets of vertices. 
\end{definition}
The following assertion relies on the fact that, when the adjacency operator $A$ is regular, the isoperimetric constant is equal to zero if and only if the norm of $A$ satisfies $\no A=\deg A$ \cite[Th.\, 2.1 and Cor.\,3.3]{MR924236}.

\begin{corollary}\label{coro:bound-of-Ak}
The norm of every $k$-adjacency operator  holds
\begin{gather}\label{eq:norm-dist-regular}
\no {A_k}\leq \deg A_k\,,\quad k\in{\mathbb N}_0\,. \end{gather}
Moreover, the isoperimetric constant is zero if and only if the equality in \eqref{eq:norm-dist-regular} holds, for all $k\in{\mathbb N}_0$ .\end{corollary}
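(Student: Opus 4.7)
The plan is to split the argument in two. For the inequality \eqref{eq:norm-dist-regular}, I would recall, as in the proof of Proposition~\ref{prop:bounded-Ak-uniformly}, that each $A_k$ is the adjacency operator of an auxiliary (not necessarily connected) graph at the same vertex set $V$. By Theorem~\ref{th:recursive-Ak}, this auxiliary graph is $\dg(A_k)$-regular; in particular it is uniformly locally finite with bound $\dg(A_k)$. Remark~\ref{re:bounded-condition} then yields $\no{A_k}\leq\dg(A_k)$ at once, so the first claim is immediate.

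For the equivalence, one direction is essentially free: if equality holds for every $k$, then specializing to $k=1$ gives $\no{A}=\dg(A)$, which by the cited result of \cite{MR924236} is equivalent to the vanishing of the isoperimetric constant. For the converse, suppose the isoperimetric constant is zero, so that $\no{A}=\dg(A)=b_1$. Since $A$ is a bounded selfadjoint operator with nonnegative matrix entries, $b_1=\max\sigma(A)\in\sigma(A)$. Iterating the recurrence \eqref{eq:recurrence-Ak} produces a polynomial representation $A_k=P_k(A)$, where the $P_k$ are determined by $P_0(x)=1$, $P_1(x)=x$, and
\[
a_{k+1}P_{k+1}(x)=(x-\alpha_k)P_k(x)-b_kP_{k-1}(x).
\]
The scalars $\dg(A_k)=\prod_{n=1}^{k}b_n/a_n$ satisfy exactly the same recurrence when $x$ is replaced by $b_1$ (since $\alpha_k=b_1-(a_k+b_{k+1})$), so an induction with $P_0(b_1)=1$, $P_1(b_1)=b_1=\dg(A_1)$ shows $P_k(b_1)=\dg(A_k)$ for every $k$. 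The spectral theorem, together with $b_1\in\sigma(A)$, then gives
\[
\no{A_k}=\no{P_k(A)}=\sup_{\lambda\in\sigma(A)}\abs{P_k(\lambda)}\geq\abs{P_k(b_1)}=\dg(A_k),
\]
which combined with the upper bound already established forces equality for every $k$.

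The step I expect to be the main obstacle is precisely $b_1\in\sigma(A)$: from $\no{A}=b_1$ one only deduces $\pm b_1\in\sigma(A)$, and ruling out the minus sign requires the nonnegativity of the matrix entries of $A$, i.e., a Perron--Frobenius style argument tailored to infinite locally finite graphs. Modulo this point, the rest is straightforward bookkeeping with the recurrence \eqref{eq:recurrence-Ak} and the continuous functional calculus of the bounded selfadjoint operator $A$.
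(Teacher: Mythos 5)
Your argument is correct and follows essentially the same route as the paper: the upper bound via Remark~\ref{re:bounded-condition} and the regularity of each $A_k$, and the converse via the polynomial identity $P_k(\dg(A))=\dg(A_k)$ together with $\dg(A)\in\sigma(A)$ and the spectral mapping theorem (your sup-formula for $\no{P_k(A)}$ over $\sigma(A)$ is the same step). The one point you flag as an obstacle, namely $b_1=\no{A}\in\sigma(A)$, is indeed settled by the Perron--Frobenius-type fact that $\sup\sigma(A)=\no{A}$ for a selfadjoint operator with nonnegative matrix entries; the paper absorbs exactly this into its citation of \cite{MR924236}.
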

\begin{proof}
The first part readily follows from Remark~\eqref{re:bounded-condition}, inasmuch as, by Theorem~\ref{th:recursive-Ak}, every $A_k$ is regular and corresponds to an adjacency operator of another graph in the same space, which clearly  is uniformly locally finite with bound  $\deg A_k$.

Now, if the isoperimetric constant is equal to zero, then  $\deg A\in\sigma(A)$. So, the spectral mapping theorem  and Proposition~\ref{prop:polynomial-degree-A_k} claim that  $\deg A_k\in\sigma(A_k)$, which implies the equality in \eqref{eq:norm-dist-regular}. The converse is straightforward. 
\end{proof}

For simplicity of notation in the sequel, we write
\begin{gather}\label{eq:arthonormal-Ak}
{\mathbb A}_k\ceq\frac1{\sqrt{\deg A_k}}A_k\,,\quad (k\in{\mathbb N}_0)
\end{gather}
which is a polynomial at $A$ of degree $k$ (v.s. Remark~\ref{rem:Ak-polynomial}).
\begin{proposition}\label{prop:recursive-normalized}
If $\{(a_n,b_n)\}_{n\in{\mathbb N}}$ is the intersection sequence of the graph, then the following recursive equation holds:
\begin{gather}\label{eq:recursive-Jacobi}
A{\mathbb A}_k=\sqrt{a_{k+1}b_{k+1}}{\mathbb A}_{k+1}+\alpha_k{\mathbb A}_k+\sqrt{a_{k}b_{k}}{\mathbb A}_{k-1}\,,\quad k\in{\mathbb N}_0\,,
\end{gather}
where ${\mathbb A}_{-1}=0$ and $\alpha_k=\deg A-(a_k+b_{k+1})$, with  $\alpha_0=0$.
\end{proposition}

\begin{proof}
It follows  from \eqref{eq:recurrence-Ak} that  
\begin{align*}
A{\mathbb A}_k&=\frac1{\sqrt{\deg A_k}}\left(a_{k+1}A_{k+1}+\alpha_kA_k+b_kA_{k-1}\right)\\ 
&=a_{k+1}\sqrt{\frac{\deg A_{k+1}}{\deg A_{k}}}{\mathbb A}_{k+1}+\alpha_k{\mathbb A}_k+
b_{k}\sqrt{\frac{\deg A_{k-1}}{\deg A_{k}}}{\mathbb A}_{k-1}\,,
\end{align*}
whence one obtains  \eqref{eq:recursive-Jacobi}, since \eqref{eq:degree-Ak-distance} implies $a_k\deg A_k=b_k\deg A_{k-1}$.
\end{proof}

Consider the probability measure defined on the $\sigma$-algebra of Borel subsets of ${\mathbb R}$, given by 
\begin{gather}
\label{eq:measure-A-v}
\mu_A(B)\ceq\ip{v}{E_A(B)v}\,,
\end{gather}
where $E_A$ is the  spectral measure of $A$ and $v$ is a fixed vertex. 

\begin{remark}\label{rm:spectral distribution-a0}
For $k\in{\mathbb N}_0$, it is a simple matter to verify from the recursive relation \eqref{eq:recursive-Jacobi} that $A^k=\sum_{t=0}^k\beta_{k,t}{\mathbb A}_t$, with $\beta_{k,t}\in{\mathbb C}$. Thus, for any $\delta_i\in V$, 
\begin{gather}\label{eq:spectral distribution-a0}
\ip{\delta_i}{A^k\delta_i}=\sum_{t=0}^k\beta_{k,t}\ip{\delta_i}{{\mathbb A}_t\delta_i}=\beta_{k,0}\,.
\end{gather}
Then, one has by \eqref{eq:measure-A-v} and \eqref{eq:spectral distribution-a0} that 
\begin{gather*}
\ip{\delta_i}{A^k\delta_i}=\beta_{k,0}=\ip{v}{A^kv}=\int x^k d\mu_A\,,
\end{gather*}
viz. the spectral distribution of $A$ in a vertex, does not depend on $v\in V$. 
\end{remark}

In what follows, we will work in the Hilbert space $(\cA K, \ip{\cdot}{\cdot}_{\mu_A})$, where \begin{gather*}
\cA K=\lrb{f(A)\,:\, f\in L_2(\mathbb R,\mu_A)}\,,
\end{gather*}
which is isomorphic to $L_2(\mathbb R,\mu_A)$ (cf. \cite[sect.\,13.4]{MR1157815} and \cite[Sect.\,5.3]{MR2953553}). Thus, it follows because of Remark~\ref{rm:spectral distribution-a0} that
\begin{gather}
\label{eq:equality-inner}
\ip{f}{g}_{\mu_A}=\ip{f(A)v}{g(A)v}\,,\quad\mbox{for all}\quad f,g\in\cA K\,.
\end{gather}

\begin{remark}\label{rm:onb-K} The family $\{{\mathbb A}_k\}_{k\in{\mathbb N}_0}$ is an orthonormal basis for $(\cA K, \ip{\cdot}{\cdot}_{\mu_A})$. Indeed,
 \begin{align*}
 \ip{{\mathbb A}_i}{{\mathbb A}_j}_{\mu_A}=\frac{1}{\sqrt{\deg A_i\deg A_j}}\ip{A_i x}{A_j x}=\delta_{ij}\,,\quad i,j\in{\mathbb N}_0\,,
  \end{align*}
 where $\delta_{ij}$ is the Kronecker delta.  
\end{remark}

\begin{definition} The \emph{multiplication operator} $J$ in $\cA K$ is defined by 
\begin{align}\label{eq:multiplication-operator}
 \begin{split}
J\colon\textrm{dom}\, J&\to\cA K\\
f(A)&\mapsto Af(A)\,,
\end{split}
\end{align} where $\textrm{dom}\, J=\{f\in\cA K\,:\,f(A),Af(A)\in\cA K\}$.
\end{definition}

\begin{theorem}\label{th:Jacobi-Adjacency}
The multiplication operator $J$ is bounded and selfadjoint, with 
\begin{gather}
\label{ea:bound-J}
\no J_{\mu_A} \leq \deg A\,.
\end{gather}
Moreover, its matrix representation is a Jacobi matrix  given by 
\begin{gather}
\label{eq:matrix-represJ}
\begin{pmatrix}
0&\sqrt{a_1b_1}&0&0&\dots\\
\sqrt{a_1b_1}&\alpha_1&\sqrt{a_2b_2}&0&\dots\\
0&\sqrt{a_2b_2}&\alpha_2&\sqrt{a_3b_3}&\dots\\
0&0&\sqrt{a_3b_3}&\alpha_3&\dots\\
\dots&\dots&\dots&\dots&\dots
\end{pmatrix}\,,
\end{gather}
where $\{(a_n,b_n)\}_{n\in{\mathbb N}}$ is the  intersection of the graph and \begin{gather*}
\alpha_n=\deg A-(a_n+b_{n+1})\,,\quad n\in{\mathbb N}\,.
\end{gather*}

\end{theorem}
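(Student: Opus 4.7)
The plan is to transfer everything to $L_2(\R,\mu_A)$ via the spectral theorem and then recognize the matrix representation by recasting Proposition~\ref{prop:recursive-normalized} in this concrete model.

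First, for the boundedness and selfadjointness of $J$: since $A$ is bounded selfadjoint with $\no A\leq\dg(A)$ by Corollary~\ref{coro:bound-of-Ak}, the measure $\mu_A$ defined in \eqref{eq:measure-A-v} is compactly supported in $\sigma(A)\subset[-\dg(A),\dg(A)]$. Under the standard identification $\cA K\cong L_2(\R,\mu_A)$, the operator $J$ in \eqref{eq:multiplication-operator} corresponds to the multiplication operator $(Mf)(x)=xf(x)$ by a bounded real-valued function. Such operators are everywhere-defined, bounded and selfadjoint, with norm equal to the essential supremum, yielding \eqref{ea:bound-J}.

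Next, to obtain the matrix \eqref{eq:matrix-represJ}, I would exhibit $\{\A_k\}_{k\in\N_0}$ as an orthonormal basis of $\cA K$ in which $J$ acts as the claimed tridiagonal operator. Orthonormality follows directly from \eqref{eq:equality-inner} and the very definition of the distance-$k$ operators: for any fixed vertex $v=\delta_i\in V$, the vectors $A_k\delta_i=\sum_{\delta_j\dis k\delta_i}\delta_j$ are sums over mutually disjoint sets of basis vertices, so $\ip{A_k\delta_i}{A_r\delta_i}=0$ for $k\neq r$, while $\no{A_k\delta_i}^2=\dg(A_k)$. The normalization \eqref{eq:arthonormal-Ak} therefore gives
\beao
\ip{\A_k}{\A_r}_{\mu_A}=\ip{\A_k(A)v}{\A_r(A)v}=\delta_{k,r}.
\eeao

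For completeness of $\{\A_k\}$, I would observe that $\A_k$ is a polynomial in $A$ of exact degree $k$ (as exhibited in \eqref{eq:polynomial-A_k-wrA} during the proof of Corollary~\ref{coro:bound-of-Ak}), so $\Span\{\A_k\}=\C[A]$. The corresponding functions $\A_k(x)$ are therefore polynomials spanning all of $\C[x]$, and polynomials are dense in $L_2(\R,\mu_A)$ because $\supp\mu_A$ is compact (Stone--Weierstrass, then density of $C(\supp\mu_A)$ in $L_2(\mu_A)$). Hence $\{\A_k\}$ is an orthonormal basis of $\cA K$.

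Finally, the matrix entries: the multiplication operator $J$ acts on $\A_k\in\C[A]\subset\cA K$ exactly as $\hat J$ in \eqref{eq:Jacobi-Adjacency}, namely $J\A_k=A\A_k=\hat J\A_k$. Proposition~\ref{prop:recursive-normalized} then reads
\beao
J\A_k=\sqrt{a_{k+1}b_{k+1}}\,\A_{k+1}+\alpha_k\A_k+\sqrt{a_kb_k}\,\A_{k-1},
\eeao
which is precisely the action of the tridiagonal matrix \eqref{eq:matrix-represJ} in the orthonormal basis $\{\A_k\}$.

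The routine verifications (orthogonality of the $A_k\delta_i$, expression of matrix entries from the three-term recurrence) are immediate from earlier results. The only step requiring any care is the completeness of $\{\A_k\}$ in $\cA K$, which is the main (but standard) point: compact support of $\mu_A$, guaranteed by the uniform local finiteness coming from distance-regularity, is what makes polynomials dense and hence turns the orthonormal system $\{\A_k\}$ into a basis.
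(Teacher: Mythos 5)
Your proposal is correct and follows essentially the same route as the paper: identify $\cA K$ with $L_2(\R,\mu_A)$ so that $J$ is everywhere-defined, bounded and selfadjoint multiplication by $x$, verify that $\{\A_k\}_{k\in\N_0}$ is an orthonormal basis, and read off the tridiagonal matrix from Proposition~\ref{prop:recursive-normalized}. The only (harmless) divergence is in the density step: the paper invokes the selfadjointness of $J$ and the moment-problem theory of Akhiezer to conclude $\C[A]$ is dense in $\cA K$, whereas you argue more elementarily via compact support of $\mu_A$ and Stone--Weierstrass, which works equally well here since $A$ is bounded.
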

\begin{proof}
Since $A$ is selfadjoint, it follows that $J$ is symmetric. Moreover, inasmuch as $A$ is bounded in $\mathcal{H}$ and in view of \eqref{eq:equality-inner}, 
\begin{align}\label{eq:bound-of-JA}
\no{Af(A)}_{\mu_A}=\no{Af(A)v}\leq \no A\,\no {f(A)}_{\mu_A} <\infty\,,\quad  f\in\cA K\,,
\end{align}
which implies  $\textrm{dom}\, J=\cA K$. Besides, from \eqref{eq:norm-dist-regular} and \eqref{eq:bound-of-JA}, one yields 
\eqref{ea:bound-J}. So, we deduce that $J$ is selfadjoint and the family of complex polynomials  at $A$ is dense in $\cA K$ (cf. \cite[Sec.\,2]{MR0184042}). Moreover,  inasmuch as  $\{{\mathbb A}_k\}_{k\in{\mathbb N}_0}$ is an orthonormal basis for $\cA K$ (v.s. Remark~\ref{rm:onb-K}), the recursive relation \eqref{eq:recursive-Jacobi} implies \eqref{eq:matrix-represJ}. This completes the proof. 
 \end{proof}

\begin{remark}\label{Rm:spectral-properties} Since the family of complex polynomials at $A$ is dense in $\cA K$,  one gets 
\begin{align*}
\cc{\textrm{span}\, \lrb{ J^n {\mathbb A}_0}}_{n\in{\mathbb N}_0}=\cA K \,,
\end{align*}
viz.  ${\mathbb A}_0$ is a cycle vector and $J$ is simple (see \cite[Sec.\,69]{MR1255973}). 
Besides, the spectrum of $J$ is not purely discrete (cf. \cite[Prop.\,5.12]{MR2953553}) and is determined by  
 \begin{gather}\label{spectrum-J}
\sigma(J)=\textrm{supp}\, \mu_A\,.
 \end{gather}
Moreover, every eigenvalue $\lambda$ of $J$ is of multiplicity one, which coincides with $\mu_A(\{\lambda\})\neq0$  \cite[Sec.\, 4.7]{MR1192782} (also \cite[Sec.\,5.4]{MR2953553}). From Corollary~\ref{coro:bound-of-Ak} and \eqref{spectrum-J}, one has that $\no J_{\mu_A}=\deg A$ if and only the isoperimetric constant of the graph is equal to zero.  Since $\{(a_n,b_n)\}_{n\in{\mathbb N}}\subset{\mathbb N}^2$, it follows that 
$\lim_{n\rightarrow\infty}\sqrt{a_nb_n}\geq1$, whence one deduces that $J$ is not a compact operator (cf. \cite[Sec.\,28]{MR1255973}).
\end{remark}
\begin{corollary}\label{cor:Jacobi-bipartite}
The spectrum of $J$ is symmetric about zero if and only if
\begin{gather}\label{eq:no-alpha-Jacobi}
 b_{n+1}=\deg A-a_n\,,\quad\mbox{for all }n\in{\mathbb N}\,. 
\end{gather}
In such a case, $J$ has the following matrix representation
\begin{gather}\label{eq:matrix-represJ-sym}
\begin{pmatrix}
0&\sqrt{a_1\deg A}&0&\dots\\
\sqrt{a_1\deg A}&0&\sqrt{a_2(\deg A-a_1)}&\dots\\
0&\sqrt{a_2(\deg A-a_1)}&0&\dots\\
\dots&\dots&\dots&\dots
\end{pmatrix}\,.
\end{gather} 
\end{corollary}
\begin{proof}
We infer from  \eqref{spectrum-J}, items \eqref{eq:bip-ulf-2},\eqref{eq:bip-ulf-4} of Theorem~\ref{th:characterization-bigraph01} and the right-hand side of \eqref{eq:degree-Ak-distance} that  $\sigma(J)$ is symmetric about zero if and only $\isosc A_n=0$, for all $n\in{\mathbb N}$, which is true if and only if $0=\alpha_n=\deg A-(a_n+b_{n+1})$, i.e., \eqref{eq:no-alpha-Jacobi}. The representation \eqref{eq:matrix-represJ-sym} follows after replacing \eqref{eq:no-alpha-Jacobi} in \eqref{eq:matrix-represJ}.
\end{proof}

\section{Distance-regular graphs with finite diameter}
\label{sec:graphs-with-finite-diameter}
\noindent
We work in this section with a distance-regular graph with finite diameter $d\in{\mathbb N}$ (v.s. \eqref{eq:diameter-graph}). In this instance, its intersection sequence is $\{(a_k,b_k)\}_{k=1}^d$, since $A_k=0$, for all $k>d$.  Let $\cA K \ceq{\mathbb C}_d[A]$ denote the family of complex polynomials at $A$ of degree $\leq d$, endowed with the inner product $\ip{\cdot}{\cdot}_{\mu_A}$ given in \eqref{eq:equality-inner}. Thus, $\{{\mathbb A}_k\}_{k=0}^d$ is an orthonormal basis for $\cA K$, with ${\mathbb A}_k$ as in \eqref{eq:arthonormal-Ak} (see Remark~\ref{rm:onb-K}). 

In what follows, we shall tackle the problem of finding $\textrm{supp}\, \mu_A$ by means of extension theory for nondensely defined symmetric operators. So, we consider the symmetric operator $J$ with domain ${\cA K}\ominus{\{{\mathbb A}_d\}}$ and matrix representation
\begin{gather}\label{eq:J-non-densely-defined}
\begin{pmatrix}
0&\sqrt{a_1b_1}&0&\dots&0&0&*\\
\sqrt{a_1b_1}&\alpha_1&\sqrt{a_2b_2}&\dots&0&0&*\\
\dots&\dots&\dots&\dots&\dots&\dots\\
0&0&0&\dots&\alpha_{d-2}&\sqrt{a_{d-1}b_{d-1}}&*\\
0&0&0&\dots&\sqrt{a_{d-1}b_{d-1}}&\alpha_{d-1}&*\\
0&0&0&\dots&0&\sqrt{a_{d}b_{d}}&*
\end{pmatrix}\,,
\end{gather}
where $\alpha_n=\deg A-(a_n+b_{n+1})$, for $n=1,\dots,d-1$.  All the selfadjoint extensions of $J$ are characterized by 
\begin{gather}\label{eq:Jacobi-t}
J_\tau\ceq \begin{pmatrix}
0&\sqrt{a_1b_1}&0&\dots&0&0\\
\sqrt{a_1b_1}&\alpha_1&\sqrt{a_2b_2}&\dots&0&0\\
\dots&\dots&\dots&\dots&\dots&\dots\\
0&0&0&\dots&\alpha_{d-1}&\sqrt{a_{d}b_{d}}\\
0&0&0&\dots&\sqrt{a_{d}b_{d}}&\tau
\end{pmatrix}\,,\quad \tau\in{\mathbb R}\,
\end{gather}
which are adapted from  \cite[Thm.\,2.4]{MR1430397} (cf. \cite[Sec.\,5]{MR4091412}). 
\begin{remark}\label{rm:jinfnity-perturbation}
It is of interest to point out that in \cite{MR1430397,MR4091412} show another selfadjoint extension $J_\infty$ of $J$ which is not an operator. However, for a feasible analysis, we only work with the extensions \eqref{eq:Jacobi-t}, which satisfy 
\begin{gather*}
J_\tau f=J_0+\tau\ip{{\mathbb A}_d}{f}_{\mu_A}{\mathbb A}_d\,,\quad f\in\cA K
\end{gather*}
viz. $J_\tau$ is a one-rank perturbation of $J_0$. Consequently, for $j,k=0,\dots, d$, 
\begin{align}\label{eq:spectral-distributions-Jt}
\begin{split}
\int x^{j+k} d\mu_{J_\tau}&=\ip{I}{J_\tau^{j+k}I}_{\mu_A}=\ip{J_\tau^jI}{J_\tau^k I}_{\mu_A}\\&=\ip{J^jI}{J^k I}_{\mu_A}=\ip{A^j}{A^k }_{\mu_A}=\int x^{j+k} d\mu_A\,,
\end{split}
\end{align}
i.e., the spectral distributions of $J_\tau$ and $A$ coincide, for all $\tau\in{\mathbb R}$. 
\end{remark}
Now, let $\lambda\in{\mathbb R}$ and 
\begin{gather}\label{eq:eigenvalue-tau}
\varphi(A)=\sum_{k=0}^d\varphi_k{\mathbb A}_k \in\cA K\,,\quad (\varphi_k\in{\mathbb C})
\end{gather}
such that $J_\tau\varphi=\lambda\varphi$. Then, 
\begin{align}\label{eq:first-kind-polynomial0}
\begin{split}
\lambda\varphi_0-\sqrt{a_1b_1}\varphi_1&=0\,,\\
-\sqrt{a_{k}b_{k}}\varphi_{k}+(\lambda-\alpha_{k-1})\varphi_{k-1}-\sqrt{a_{k-1}b_{k-1}}\varphi_{k-2}&=0\,,\quad	(2\leq k\leq d)\\
(\lambda-\tau)\varphi_d-\sqrt{a_{d}b_{d}}\varphi_{d-1}&=0\,.
\end{split}
\end{align}
Clearly for $k=1,\dots,d$,  the number $\varphi_k$ is determined uniquely from  $\varphi_0$  and is a polynomial of degree $k-1$ at $\lambda$. Thereby,   \begin{gather}\label{eq:dim-eigenspace}
\dim \ker(J_\tau-\lambda I)\leq 1\,.
\end{gather}

We use the above reasoning to define the following. 
\begin{definition}
The  \emph{first-kind} polynomials associated to $J_\tau$ are defined by 
\begin{align}\label{eq:k-polynomial}
\begin{split}
P_0(x)&\ceq1\,,\\
P_1(x)&\ceq x/\sqrt{a_1b_1}\,,\\
P_{k}(x)&\ceq\frac{(x-\alpha_{k-1})P_{k-1}(x)-\sqrt{a_{k-1}b_{k-1}}P_{k-2}(x)}{\sqrt{a_{k}b_{k}}}\,,\quad (2\leq k\leq d)\\
P_{d+1}^{(\tau)}(x)&\ceq(x-\tau)P_d(x)-\sqrt{a_{d}b_{d}}P_{d-1}(x)\,.
\end{split}
\end{align}
 \end{definition}
The  polynomials \eqref{eq:k-polynomial} have real coefficients. Besides, $\{P_k\}_{k=0}^d$ is the same for any $J_\tau$, since $J_\tau$ is a one-rank perturbation of $J_0$.
\begin{theorem}\label{eq:to-spectra-Jtau}
For $\tau\in{\mathbb R}$, the spectrum of the selfadjoint extension $J_\tau$ is
\begin{gather*}
\sigma(J_\tau)=\lrb{\lambda^{(\tau)}\in{\mathbb R}\,:\,P_{d+1}^{(\tau)}(\lambda^{(\tau)})=0}\,.
\end{gather*}
Moreover, every eigenvalue $\lambda^{(\tau)}\in\sigma(J_\tau)$ is of multiplicity one and its corresponding eigenfunction (up to normalization) is  
\begin{gather}\label{eq:eigenfunctions-Jt}
\varphi_{\lambda^{(\tau)}}(A)=\sum_{k=0}^dP_k(\lambda^{(\tau)}){\mathbb A}_k\,.
\end{gather}
\end{theorem}

\begin{proof}
The  first part of proof is straightforward by remarking that $\{P_k(\lambda)\}_{k=0}^d$ holds \eqref{eq:first-kind-polynomial0} if and only if $P_{d+1}^{(\tau)}(\lambda)=0$. The multiplicity of every eigenvalue follows from \eqref{eq:dim-eigenspace}. The corresponding eigenvector  \eqref{eq:eigenfunctions-Jt} is directly from \eqref{eq:eigenvalue-tau}. 
\end{proof}

\begin{remark}
For $i=0,\dots,d$, the \emph{Christoffel-Darboux} kernel is  
\begin{gather}\label{eq:CD-Formulae}
K_i(x,y)\ceq\sum_{j=0}^{i}P_j(x)P_j(y)\,,
\end{gather}
which satisfies  (cf. \cite{MR587909})
\begin{align}\label{eq:reproducing-polynomials}
K_k(x,y)=\sqrt{a_{k+1}b_{k+1}}\frac{P_k(y)P_{k+1}(x)-P_k(x)P_{k+1}(y)}{x-y}\,,
\end{align}
for $k=0,\dots,d-1$. Moreover, it holds the following property  
\begin{align}\label{eq:kernel-sumpk}
K_d(x,x)={P_d(x)\lrp{P_{d+1}^{(\tau)}(x)}'-\lrp{P_d(x)}'P_{d+1}^{(\tau)}}(x)\,.
\end{align} 
Indeed, one simply computes from \eqref{eq:reproducing-polynomials} that 
\begin{align*}
K_d(x,y)&=P_d(x)P_d(y)+\sqrt{a_{d}b_{d}}\frac{P_{d-1}(y)P_{d}(x)-P_{d-1}(x)P_{d}(y)}{x-y}\\
&= P_d(y)\frac{P_{d+1}^{(\tau)}(x)}{x-y}-P_d(x)\frac{P_{d+1}^{(\tau)}(y)}{x-y}\,,
\end{align*}
wherefrom letting $x$ tends to $y$, one yields \eqref{eq:kernel-sumpk}.
\end{remark}

\begin{corollary}
The spectra of the selfadjoint extensions $J_\tau$ have no intersection and are pairwise interlaced.
\end{corollary}
\begin{proof}
For $\tau\in{\mathbb R}$, one has from Theorem~\ref{eq:to-spectra-Jtau} that the eigenvalues of $J_\tau$ are the roots of $P_{d+1}^{(\tau)}$, which are real and different from each other. So for $\eta\neq\tau$, if $\lambda$ is a root of  $P_{d+1}^{(\tau)}$, then 
\begin{align}\label{eq:Pn1vsPn}
P_{d+1}^{(\eta)}(\lambda)&=P_{d+1}^{(\eta)}(\lambda)-P_{d+1}^{(\tau)}(\lambda)=(\tau-\eta)P_{d}(\lambda)\,.
\end{align}
Besides, in view of \eqref{eq:kernel-sumpk}, 
\begin{align}\label{eq:kernel-prime}
P_d(\lambda)(P_{d+1}^{(\tau)})'(\lambda)=K_d(\lambda,\lambda)>0\,,
\end{align}
which by \eqref{eq:Pn1vsPn} $P_{d+1}^{(\eta)}(\lambda)\neq0$, viz. $J_\tau$ and $J_\eta$ have no common eigenvalues. 

Now, if $\alpha<\beta$ are two consecutive eigenvalues of $J_\tau$, then one has that $\textrm{sgn}\,(P_{n+1}^{(\tau)})'(\alpha)\neq \textrm{sgn}\, (P_{n+1}^{(\tau)})'(\beta)$ and \eqref{eq:kernel-prime} yields $\textrm{sgn}\, P_{d}(\alpha)\neq \textrm{sgn}\, P_{d}(\beta)$. Thus, \eqref{eq:Pn1vsPn} implies $\textrm{sgn}\, P_{d+1}^{(\eta)}(\alpha)\neq \textrm{sgn}\, P_{d+1}^{(\eta)}(\beta)$, which provide that  $J_\eta$ has an eigenvalue in $(\alpha,\beta)$. To conclude, if $J_\eta$ has two consecutive eigenvalues $\gamma_1<\gamma_2$ within $(\alpha,\beta)$, then one infers using the same above reasoning that  $J_\tau$ has an eigenvalue in $(\gamma_1,\gamma_2)$. This contradicts our assumption that $\alpha,\beta$ are consecutive.
\end{proof}

The above reasoning shows that there is a one-to-one correspondence, except at one point without considering the selfadjoint extension $J_\infty$ of \eqref{eq:J-non-densely-defined} (q.v. Remark~\ref{rm:jinfnity-perturbation}),  between the interval of two consecutive eigenvalues $\alpha<\beta$ of $J_{\tau_0}$ and the set  $\{\lambda_{J_\tau}\in\sigma(J_\tau)\cap(\alpha,\beta)\}_{\tau_0\neq\tau\in{\mathbb R}}$. Roughly speaking, Fig.~\ref{fig:eigenJtau} represents this behavior.  
\begin{figure}[h]\centering
\includegraphics[width=8cm]{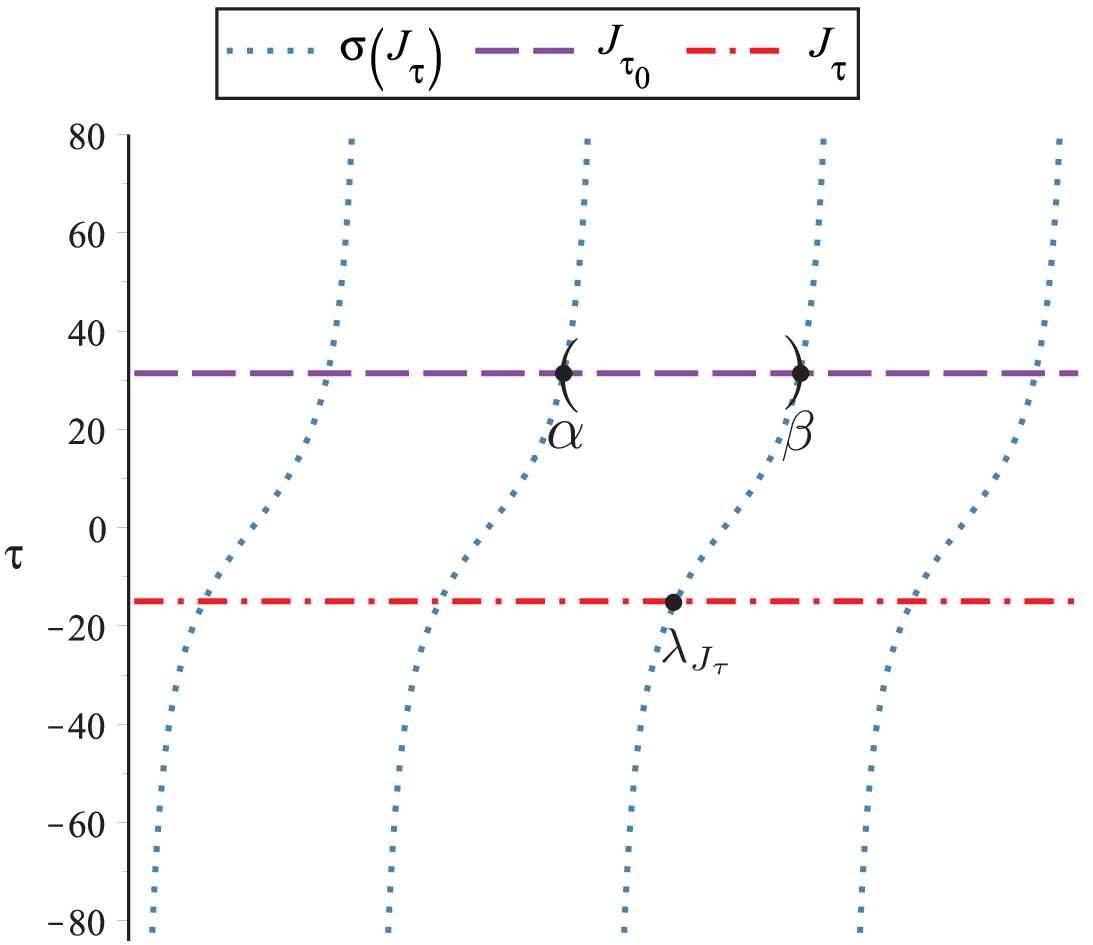}
  \caption{\centering Eigenvalues of $J_\tau$.}
  \label{fig:eigenJtau}
\end{figure}

For $\tau\in{\mathbb R}$, one has on the basis of \eqref{eq:spectral-distributions-Jt} that $\mu_{J_\tau}$ is a probability measure. Besides, it follows from Theorem~\ref{eq:to-spectra-Jtau} that  
\begin{align}\label{eq:spectral-measure-jt}
\mu_{J_\tau}(x)=\sum_{\lambda\in\sigma(J_\tau)}\frac1{
{\no{\varphi_{\lambda}(A)}}_{\mu_A}^{2}}{\mathds{1}_{\lambda}(x)}=\sum_{\lambda\in\sigma(J_\tau)}\frac1{K_d(\lambda,\lambda)}{\mathds{1}_{\lambda}(x)}\,.
\end{align}

In the following, we will determine the support of the measure of $A$. 

\begin{lemma}\label{lem:minimal-A-normalized}
For $k=0,\dots,d$, the $k$-th first-kind polynomial satisfies 
\begin{gather*}
P_k(A)={\mathbb A}_k\,.
\end{gather*}
\end{lemma}
\begin{proof}
The proof carries out by induction over $k$. It is clear that $P_0(A)={\mathbb A}_0$ and $P_1(A)={\mathbb A}_1$, since $a_1=1$ and $b_1=\deg A$. So, we may suppose that $P_{j}(A)={\mathbb A}_{j}$, for $j=0,\dots,k-1$. Thus by \eqref{eq:k-polynomial}, 
\begin{align*}
P_{k}(A)&=\frac{(A-\alpha_{k-1} I)P_{k-1}(A)-\sqrt{a_{{k-1}}b_{{k-1}}}P_{k-2}(A)}{\sqrt{a_{k}b_{k}}}\\
&=\frac{1}{\sqrt{a_{k}b_{k}}}\left(A{\mathbb A}_{k-1}-\alpha_{k-1}{\mathbb A}_{k-1}-\sqrt{a_{k-1}b_{k-1}}{\mathbb A}_{k-2}\right)\,,
\end{align*}
whence from Proposition~\ref{prop:recursive-normalized}, the assertion follows.
\end{proof}

The spectrum $\sigma(A)$ has $d+1$ distinct eigenvalues, since the diameter of the distance-regular graph is $d$  \cite[Sec.\,6.3]{MR3617953}. So, the degree of the minimal polynomial of $A$ is $d+1$.
\begin{theorem}\label{Thm:Jacobi-vs-A}
The support of $\mu_A$ is the spectrum of the extension  $J_{\deg A-a_d}$. \end{theorem}
\begin{proof}
We only need to show that $A$ and $J_{\deg A-a_d}$ have the same minimal polynomial. Note from \eqref{eq:cond-dist-regular} that $b_k=0$, since ${\mathbb A}_k=0$, for all $k> d$. Thus,  by virtue of Proposition~\ref{prop:recursive-normalized}, 
 \begin{gather}\label{eq:finite-n-recurrence}
A{\mathbb A}_d=\left(\deg A-a_d\right){\mathbb A}_d+\sqrt{a_db_d}{\mathbb A}_{d-1}\,.
\end{gather}
Moreover, Theorem~\ref{eq:to-spectra-Jtau} asserts that $P_{d+1}^{(\deg A-a_d)}$ is the minimal polynomial of $J_{\deg A-a_d}$. In this fashion, from Lemma~\ref{lem:minimal-A-normalized} and in view of \eqref{eq:finite-n-recurrence}, one computes 
\begin{align*}
P_{d+1}^{(\deg A-a_d)}(A)&=(A-(\deg A-a_d)I)P_d(A)-\sqrt{a_{d}b_{d}}P_{d-1}(A)\\
&=A{\mathbb A}_d-(\deg A-a_d){\mathbb A}_d-\sqrt{a_{d}b_{d}}{\mathbb A}_{d-1}=0\,,
\end{align*}
which completes the proof.
\end{proof}

We conclude this section with the following result, which is known as \emph{Biggs' formula}, and it was first shown in \cite[Th.\,21.4]{MR1271140}.
\begin{corollary}\label{cor:multiplicity-eigenvalues-A}
Let $\{\lambda_i\}_{i=0}^d$ be the distinct eigenvalues of $A$, with respectively multiplicities $\{m(\lambda_i)\}_{i=0}^d$ and $n\in{\mathbb N}$ the number of vertices of the graph. Then, 
\begin{gather*}
m(\lambda_i)=\frac{n}{K_d(\lambda_i,\lambda_i)}\,, \quad 0\leq i\leq d
\end{gather*}
where $K_d$ is the Christoffel-Darboux kernel \eqref{eq:CD-Formulae}.
\end{corollary}
\begin{proof}
Note that $A$ is an $n\times n$ matrix and $\mu_A(x)=n^{-1}\sum_{i=0}^dm(\lambda_i)\mathds{1}_{\lambda_i}(x)$, since it is a probability measure. Hence, it follows from \eqref{eq:spectral-measure-jt} and Theorem~\ref{Thm:Jacobi-vs-A} that 
\begin{gather*}
\frac{m(\lambda_i)}{n}=\mu_A(\lambda_i)=\mu_{J_{\deg A-a_d}}(\lambda_i)=\frac{1}{K_d(\lambda_i,\lambda_i)}\,,\quad (0\leq i\leq d)
\end{gather*}
as required.
\end{proof}

\section{Examples}
\label{sec:Examples}
\subsection{Regular trees}
For a fix number $n\geq2$,  let $T_n$ denote a tree in which each vertex has exactly $n$ neighbors, viz. its adjacency operator $A$ is $n$-regular (e.g., see Fig.~\ref{fig:Tn-regular-tree}).
\begin{figure}[h!]\centering
    \includegraphics[width=12cm]{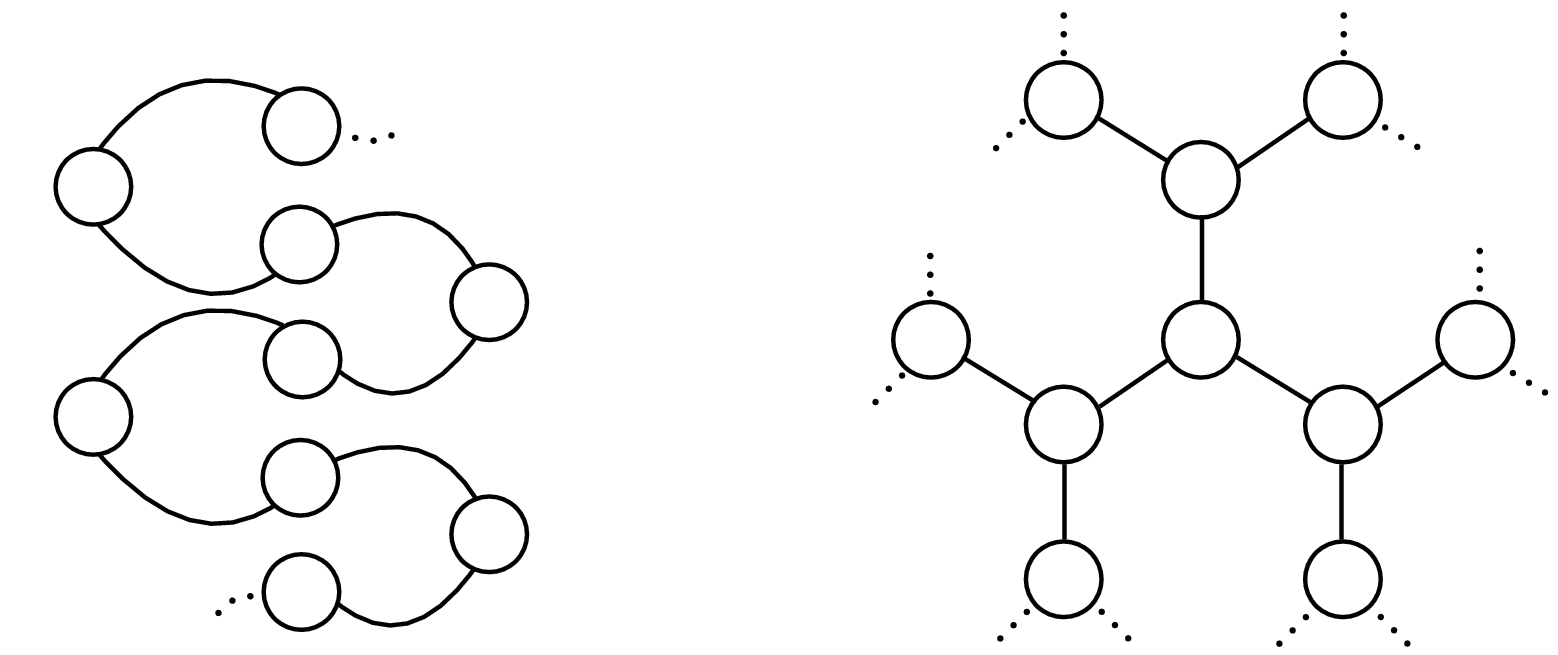}
    \caption{\centering $T_2$ and $T_3$.}
  \label{fig:Tn-regular-tree}
\end{figure}
The graph $T_n$  is  distance-regular with intersection sequence 
\begin{gather*}
\lrb{(1,n),(1,n-1),(1,n-1),\dots}\,.
\end{gather*}
Thus, from Theorem~\ref{th:recursive-Ak}, every $k$-adjacency operator is regular and isoscyclical, with 
\begin{gather*}
\deg A_k=n(n-1)^{k-1}\quad\mbox{and}\quad \isosc A_k=0\,.\quad (k\geq 1)
\end{gather*}
Besides, the Jacobi adjacency operator \eqref{eq:matrix-represJ} of  $T_n$ is 
\begin{gather*}
J_{T_n}=\begin{pmatrix}
0&\sqrt{n}&0&0&\dots\\
\sqrt{n}&0&\sqrt{n-1}&0&\dots\\
0&\sqrt{n-1}&0&\sqrt{n-1}&\dots\\
0&0&\sqrt{n-1}&0&\dots\\
\dots&\dots&\dots&\dots&\dots
\end{pmatrix}\,,
\end{gather*}
with spectral distribution (equivalent to the spectral distribution of $A$) 
\begin{gather}\label{eq:spectral-distribution-A}
d\mu_A(x)=\frac{n\sqrt{4(n-1)-x^2}}{2\pi(n^2-x^2)}dx\,,\quad \abs x\leq2\sqrt{n-1}\,.
\end{gather}
Clearly, \eqref{eq:spectral-distribution-A} is symmetric about zero, i.e., $T_n$ is bipartite. Moreover, the norm of $\no{J_{T_n}}_{\mu_A}$ is $2\sqrt{n-1}$ (v.s. Remark~\ref{Rm:spectral-properties} and Corollary~\ref{cor:Jacobi-bipartite}).  

The spectral distribution \eqref{eq:spectral-distribution-A} was proven by Mckay \cite{MR629617} (cf. \cite[Sec.\,6.5]{MR3617953}). Besides, in \cite{MR3542847} shows the distribution of all the $k$-adjacency operators of $T_n$.

\subsection{Complete graphs}
For a solid analysis, let us regard a fixed number $n\geq2$ and denote by $K_n$ the graph with $n$ vertices in which any pair of vertices are adjacent to each other (e.g., see Fig.~\ref{fig:Kn-regular}).
\begin{figure}[h!]\centering
    \includegraphics[width=14cm]{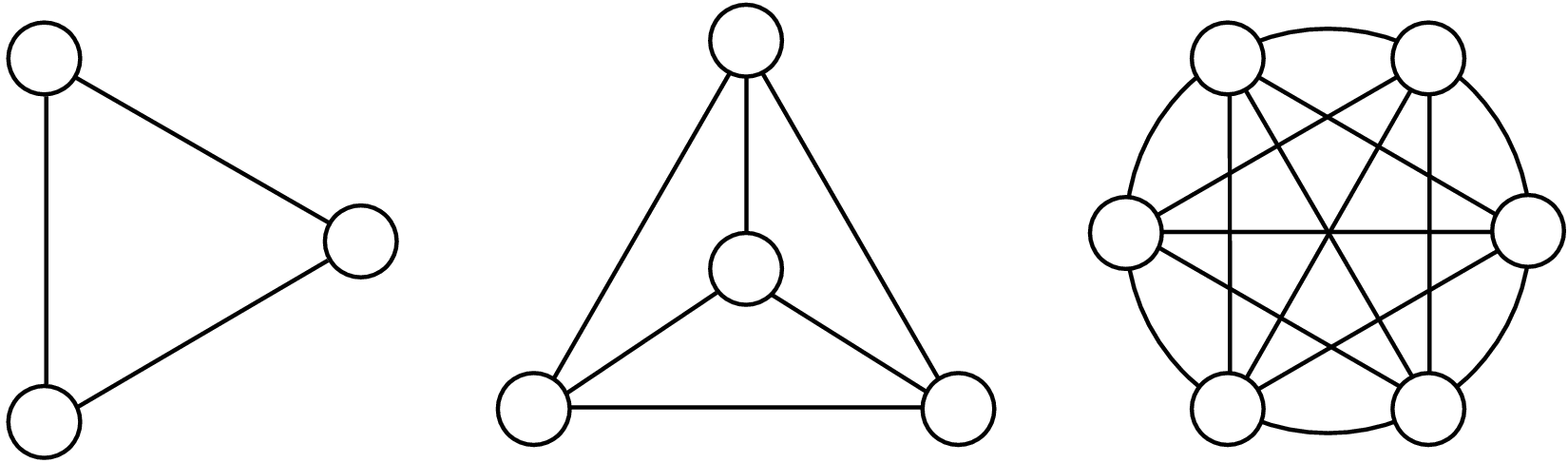}
  \caption{\centering $K_3$, $K_4$ and $K_6$.}
  \label{fig:Kn-regular}
\end{figure}
The graph $K_n$ is distance-regular with diameter equal to one and intersection sequence $\{(1,n-1)\}$. In view of Theorem~\ref{th:recursive-Ak}, the adjacency operator $A$ of $K_n$ follows 
 \begin{gather*}
 \deg A=n-1\quad \mbox{and}\quad \isosc A=\frac{(n-2)(n-1)}{2}\,,
 \end{gather*}
 e.g., any vertex of $K_6$ has five neighbors and ten isoscycles.
 
The operator \eqref{eq:J-non-densely-defined} and its selfadjoint extensions \eqref{eq:Jacobi-t},  in the Hilbert  space $\cA K=\textrm{span}\,\{ I,{\mathbb A}_1\}$, are given by
\begin{gather*}
J=\begin{pmatrix}
0&*\\
\sqrt{n-1}&*
\end{pmatrix}\quad;\quad
J_\tau=\begin{pmatrix}
0&\sqrt{n-1}\\
\sqrt{n-1}&\tau
\end{pmatrix}\,,\quad (\tau\in{\mathbb R})
\end{gather*} 
respectively. The eigenvalues of $\sigma(J_\tau)=\{\lambda_+^{(\tau)},\lambda_-^{(\tau)}\}$ are characterized by 
\begin{gather}\label{eq:eigenvalue-function}
\lambda_+^{(\tau)}=\frac{\tau}{2}+\frac{1}{2}\sqrt{\tau^2+4(n-1)}\quad;\quad \lambda_-^{(\tau)}=\frac{\tau}{2}-\frac{1}{2}\sqrt{\tau^2+4(n-1)}\,,
\end{gather}
with eigenfunctions (up to normalization) $\varphi_{\lambda_\pm^{(\tau)}}(A)=I+(n-1)^{-1}\lambda_\pm^{(\tau)}A$. Besides, the spectral measures \eqref{eq:spectral-measure-jt} are
\begin{gather}\label{eq:spectral-measures-Jt}
\mu_{J_\tau}(x)=\frac{1}{1+\lambda_+}\mathds{1}_{\lambda_+}(x)+\frac{1}{1+\lambda_-}\mathds{1}_{\lambda_-}(x)\,,
\end{gather}
which have the same distribution  equal to the distribution of  $\mu_A$ (v.s. \eqref{eq:spectral-distributions-Jt}).

Now, on the basis of Theorem~\ref{Thm:Jacobi-vs-A}, the support of $\mu_A$ is (e.g., see Fig.~\ref{fig:Eigenvalues-Kn})
\begin{figure}[h!]\centering
    \includegraphics[width=16cm]{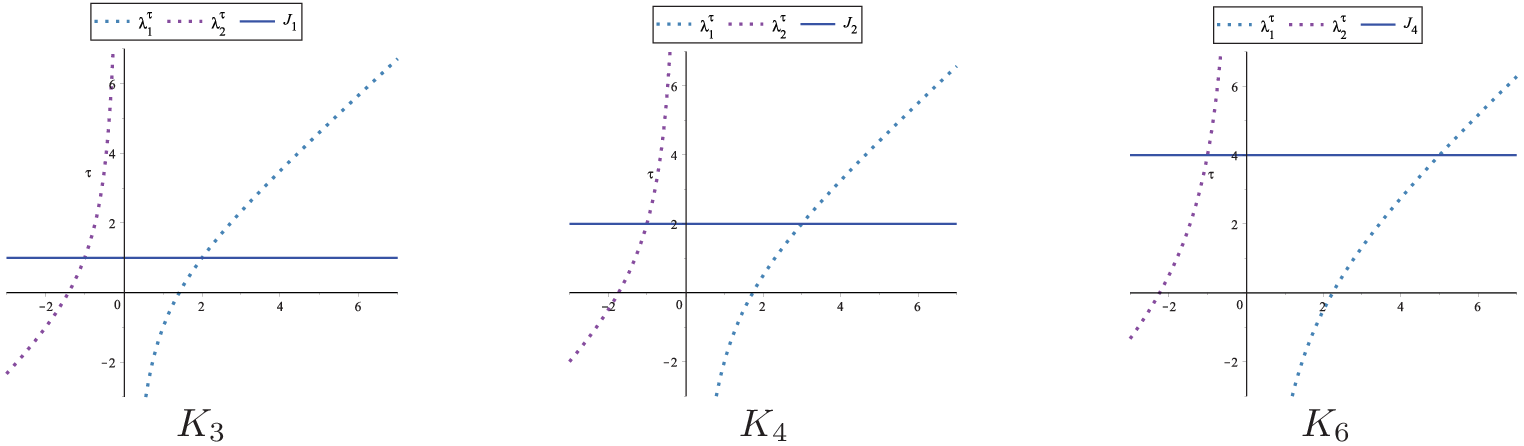} 
  \caption{\centering Eigenvalues \eqref{eq:eigenvalue-function} vs. $J_{n-2}$.}
  \label{fig:Eigenvalues-Kn}
\end{figure}
\begin{gather*}
\sigma(J_{n-2})=\lrb{-1,n-1}\,.
\end{gather*}
In this fashion, \eqref{eq:eigenvalue-function} and \eqref{eq:spectral-measures-Jt} yield
\begin{align*}
\mu_A(x)=\frac{n-1}{n}\mathds{1}_{-1}(x)+\frac{1}{n}\mathds{1}_{n-1}(x)\,.
\end{align*}

The above account clarifies that the eigenvalues of $A$ are $ -1$ and $n-1$ with multiplicities $ n-1$ and $1$, respectively (v.s. Corollary~\ref{cor:multiplicity-eigenvalues-A}).\vskip3mm

\noindent{\bf Acknowledgments:}
\noindent
\noindent The author gratefully acknowledges the partial supports from CONACYT-Mexico, FORDECYT 265667 “Programa para un Avance Global e Integrado de la Matem\'atica Mexicana” and postdoctoral fellowship 136135 “Estructura de los estados estacionarios de generadores de Markov de transporte cu\'antico y del l\'imite de baja densidad”. The author thanks Prof. O. Arizmendi for useful comments of the exposition of this material. 



\def\cprime{$'$} \def\lfhook#1{\setbox0=\hbox{#1}{\ooalign{\hidewidth
  \lower1.5ex\hbox{'}\hidewidth\crcr\unhbox0}}} \def\cprime{$'$}
  \def\cprime{$'$} \def\cprime{$'$} \def\cprime{$'$} \def\cprime{$'$}
  \def\cprime{$'$} \def\cprime{$'$}
\providecommand{\bysame}{\leavevmode\hbox to3em{\hrulefill}\thinspace}
\providecommand{\MR}{\relax\ifhmode\unskip\space\fi MR }
\providecommand{\MRhref}[2]{%
  \href{http://www.ams.org/mathscinet-getitem?mr=#1}{#2}
}
\providecommand{\href}[2]{#2}


\begin{thebibliography}{10}

\bibitem{MR0184042}
N.~I. Akhiezer, \emph{The classical moment problem and some related questions
  in analysis}, Translated by N. Kemmer, Hafner Publishing Co., New York, 1965.
  \MR{0184042 (32 \#1518)}

\bibitem{MR1255973}
N.~I. Akhiezer and I.~M. Glazman, \emph{Theory of linear operators in {H}ilbert
  space}, Dover Publications Inc., New York, 1993, Translated from the Russian
  and with a preface by Merlynd Nestell, Reprint of the 1961 and 1963
  translations, Two volumes bound as one. \MR{1255973 (94i:47001)}

\bibitem{MR0123188}
Richard Arens, \emph{Operational calculus of linear relations}, Pacific J.
  Math. \textbf{11} (1961), 9--23. \MR{0123188 (23 \#A517)}

\bibitem{MR3542847}
Octavio Arizmendi and Tulio Gaxiola, \emph{On the spectral distribution of
  distance-{$k$} graph of free product graphs}, Infin. Dimens. Anal. Quantum
  Probab. Relat. Top. \textbf{19} (2016), no.~3, 1650017, 17. \MR{3542847}

\bibitem{MR924236}
N.~L. Biggs, Bojan Mohar, and John Shawe-Taylor, \emph{The spectral radius of
  infinite graphs}, Bull. London Math. Soc. \textbf{20} (1988), no.~2,
  116--120. \MR{924236}

\bibitem{MR1271140}
Norman Biggs, \emph{Algebraic graph theory}, second ed., Cambridge Mathematical
  Library, Cambridge University Press, Cambridge, 1993. \MR{1271140}

\bibitem{MR1192782}
M.~Sh. Birman and M.~Z. Solomjak, \emph{Spectral theory of selfadjoint
  operators in {H}ilbert space}, Mathematics and its Applications (Soviet
  Series), D. Reidel Publishing Co., Dordrecht, 1987, Translated from the 1980
  Russian original by S. Khrushch{\"e}v and V. Peller. \MR{1192782 (93g:47001)}

\bibitem{MR1631548}
Ronald Cross, \emph{Multivalued linear operators}, Monographs and Textbooks in
  Pure and Applied Mathematics, vol. 213, Marcel Dekker, Inc., New York, 1998.
  \MR{1631548}

\bibitem{MR1318517}
V.~A. Derkach and M.~M. Malamud, \emph{The extension theory of {H}ermitian
  operators and the moment problem}, J. Math. Sci. \textbf{73} (1995), no.~2,
  141--242, Analysis. 3. \MR{1318517}

\bibitem{MR0361889}
A.~Dijksma and H.~S.~V. de~Snoo, \emph{Self-adjoint extensions of symmetric
  subspaces}, Pacific J. Math. \textbf{54} (1974), 71--100. \MR{0361889 (50
  \#14331)}

\bibitem{MR3869419}
Pavel Exner, Aleksey Kostenko, Mark Malamud, and Hagen Neidhardt,
  \emph{Spectral theory of infinite quantum graphs}, Ann. Henri Poincar\'{e}
  \textbf{19} (2018), no.~11, 3457--3510. \MR{3869419}

\bibitem{MR1430397}
Seppo Hassi and Henk de~Snoo, \emph{One-dimensional graph perturbations of
  selfadjoint relations}, Ann. Acad. Sci. Fenn. Math. \textbf{22} (1997),
  no.~1, 123--164. \MR{1430397 (97m:47025)}

\bibitem{MR3902704}
Feng Jian and Dandan. Shi, \emph{Complex network theory and its application
  research on {P}2{P} networks}, Appl. Math. Nonlinear Sci. \textbf{1} (2016),
  no.~1, 45--52. \MR{3902704}

\bibitem{MR3891807}
Aleksey Kostenko and Noema Nicolussi, \emph{Spectral estimates for infinite
  quantum graphs}, Calc. Var. Partial Differential Equations \textbf{58}
  (2019), no.~1, Art. 15, 40. \MR{3891807}

\bibitem{MR587909}
H.~J. Landau, \emph{The classical moment problem: {H}ilbertian proofs}, J.
  Functional Analysis \textbf{38} (1980), no.~2, 255--272. \MR{587909}

\bibitem{MR629617}
Brendan~D. McKay, \emph{The expected eigenvalue distribution of a large regular
  graph}, Linear Algebra Appl. \textbf{40} (1981), 203--216. \MR{629617}

\bibitem{MR683222}
Bojan Mohar, \emph{The spectrum of an infinite graph}, Linear Algebra Appl.
  \textbf{48} (1982), 245--256. \MR{683222}

\bibitem{MR3617953}
Nobuaki Obata, \emph{Spectral analysis of growing graphs}, SpringerBriefs in
  Mathematical Physics, vol.~20, Springer, Singapore, 2017, A quantum
  probability point of view. \MR{3617953}

\bibitem{MR4091412}
Josu\'{e}~I. Rios-Cangas and Luis~O. Silva, \emph{Perturbation theory for
  selfadjoint relations}, Ann. Funct. Anal. \textbf{11} (2020), no.~1,
  154--170. \MR{4091412}

\bibitem{MR1157815}
Walter Rudin, \emph{Functional analysis}, second ed., International Series in
  Pure and Applied Mathematics, McGraw-Hill, Inc., New York, 1991. \MR{1157815
  (92k:46001)}

\bibitem{MR2953553}
Konrad Schm{\"u}dgen, \emph{Unbounded self-adjoint operators on {H}ilbert
  space}, Graduate Texts in Mathematics, vol. 265, Springer, Dordrecht, 2012.
  \MR{2953553}

\bibitem{MR3823031}
Allen Tannenbaum, \emph{Robustness of complex networks with applications to
  cancer biology}, SIAM News \textbf{51} (2018), no.~3, 1, 3. \MR{3823031}

\bibitem{MR1711536}
Gerald Teschl, \emph{Jacobi operators and completely integrable nonlinear
  lattices}, Mathematical Surveys and Monographs, vol.~72, American
  Mathematical Society, Providence, RI, 2000. \MR{1711536}

\bibitem{MR657116}
Aleksandar Torga\v{s}ev, \emph{On spectra of infinite graphs}, Publ. Inst.
  Math. (Beograd) (N.S.) \textbf{29(43)} (1981), 269--282. \MR{657116}

\end{thebibliography}


\end{document}